\newif\ifdraft \draftfalse 
\definecolor{darkblue}{rgb}{0,0,.5}
\newcommand{\R}{\mathbb{R}}
\newcommand{\N}{\mathbb{N}}
\newtheorem{theorem}{Theorem}
\newtheorem{lemma}{Lemma}[section]
\newtheorem{definition}{Definition}[section]
\title{A Unifying Privacy Analysis Framework for Unknown Domain Algorithms in Differential Privacy}
\author{Ryan Rogers}
\begin{document}

\maketitle

\begin{abstract}
There are many existing differentially private algorithms for releasing histograms, i.e. counts with corresponding labels, in various settings.  Our focus in this survey is to revisit some of the existing differentially private algorithms for releasing histograms over unknown domains, i.e. the labels of the counts that are to be released are not known beforehand.  The main practical advantage of releasing histograms over an unknown domain is that the algorithm does not need to fill in missing labels because they are not present in the original histogram but in a hypothetical neighboring dataset could appear in the histogram.  However, the challenge in designing differentially private algorithms for releasing histograms over an unknown domain is that some outcomes can clearly show which input was used, clearly violating privacy.  The goal then is to show that the differentiating outcomes occur with very low probability.  We present a unified framework for the privacy analyses of several existing algorithms.  Furthermore, our analysis uses approximate concentrated differential privacy from \citet{BunSt16}, which can improve the privacy loss parameters rather than using differential privacy directly, especially when composing many of these algorithms together in an overall system.  
\end{abstract}

\section{Introduction}

Releasing histograms, counts over some set of items, is one of the most fundamental tasks in data analytics.  Given a dataset, we want to compute the number of rows that satisfy some condition, grouped together by some column(s) of interest, say the number of rows in each country.  Despite the commonality of this task, there are many different differentially private algorithms to use, depending on the setting we are in, e.g. do we want to show only the top-$k$, do we know how many items a user can have in any row, do we know all possible values that a column can take, how often is the data refreshed?  We will focus on the setting where we do not have, or even know, the counts of all possible items in a histogram.  This is typical in practice, because SQL queries will only return items that have positive counts, otherwise how would it know items not present in the dataset?  Unfortunately, differential privacy requires considering a hypothetical neighboring dataset, which might contain items that were previously unseen, which makes the privacy analysis challenging.  Either we need to populate all possible items that \emph{could} appear in the dataset and fill in the missing counts as zero, or we need to design better, more practical, DP algorithms.  In the latter case, we would want to be able to ensure DP whether we were computing the top-10 skills in a dataset or the top-10 credit card numbers in the dataset.  The latter should return no results, but the algorithm should ensure privacy in both cases.  We will refer to the scenario where the DP algorithm does not know the domain of items that could be in a dataset beforehand as the \emph{Unknown Domain} setting.

In differential privacy there are typically two different parameters $\diffp, \delta$ that are used to quantify the privacy of an algorithm.  The parameter $\diffp$ is typically referred to as the amount of \emph{privacy loss} that an algorithm ensures, while $\delta$ is commonly referred to as the probability in which the algorithm can have larger privacy loss than $\epsilon$.  In fact, when $\delta >0$, we say an algorithm satisfies \emph{approximate} differentially privacy, while we say \emph{pure} DP with $\delta = 0$. 
 The parameter $\delta$ then can be the chance that the algorithm returns a result that clearly violates privacy, e.g. returning an individual's data record in the data set. Not every approximate DP algorithm satisfies this interpretation, which has resulted in variants of DP, based on R\'enyi divergence.  In particular, adding Gaussian noise with fixed standard deviation $\sigma$ to a statistic of interest cannot be \emph{pure} DP, but can be $(\diffp(\delta),\delta)$-DP for any $\delta>0$.  Hence the probability of failure need not be fixed in advance in the algorithm.  However, there are many different algorithms that are shown to be $(\diffp, \delta)$-differentially private, for a particular $\delta>0$.  In particular, designing DP algorithms for releasing histograms in the Unknown Domain setting will require setting a failure probability in advance.

Consider the example where we want to know the most popular words typed in an email client.  The set of all potential words is massive, and can include slang, typos, and abbreviations, so we only want to take words that are present in the dataset, rather than the set of all possible words.  So one approach would be to add noise to all word counts and sort them to return the top ones.  However, there could be a word like ``RyanRogersSSN123456789".  Even if there is noise in the counts, the mere presence of this record is a clear violation of privacy.  To prevent this, we can introduce a threshold on the noisy counts, but then there is a chance that the noise is especially large, making a small count go above the threshold.  This is where the $\delta>0$ probability comes in, so we want to make sure that the threshold is set high enough to ensure that small counts with noise can only appear above the threshold with probability $\delta$.  However, it does not suffice to just bound the probability of \emph{bad} outcomes to prove an algorithm is approximate differentially private.  

We present a general framework that can be used in the privacy analysis of several algorithms that aims to release counts from an unspecified domain, which would only return results that are present in the original dataset, although the framework can be applied to other scenarios.  The main idea is to show a mechanism $A$ satisfies three conditions: (1) there are small chance events that can lead to \emph{differentiating} outcomes, where it is clear whether one dataset was used, (2) there is another mechanism $A'$ that can know both neighboring datasets, and is equal to $A$ on all non-differentiating outcomes, and (3) $A'$ is pure DP.  Although the algorithms we cover were previously shown to satisfy approximate differential privacy, we revisit their analyses, following our general framework.  Furthermore, we show that providing a privacy analysis in terms of approximate \emph{concentrated DP} (CDP) can lead to improved privacy parameters, especially with algorithms based on Gaussian noise or the Exponential Mechanism.  

We advocate for presenting privacy guarantees of new or existing algorithms in terms of approximate CDP, rather than approximate DP, as composing the privacy parameters for CDP parameters is straightforward, while composing approximate DP parameters can be complicated and loose.  Note that each algorithm is likely to be part of a more general privacy system where the overall privacy guarantee of the system can be in terms of approximate DP.  It has become common to compose algorithms using an analysis based on approximate CDP and then converting to an overall approximate DP guarantee at the end.  Leaving privacy guarantees of individual algorithms in terms of approximate DP will lead to loose bounds in converting the DP parameters to CDP parameters, composing CDP parameters, then lastly converting back to DP at the end.

\section{Preliminaries}

We now define approximate differential privacy which depends on neighboring datasets $x, x' \in \cX$, denoted as $x\sim x'$ that differ in the presence or absence of one user's records.

\begin{definition}[\citet{DworkMcNiSm06, DworkKeMcMiNa06}]
An algorithm  $A : \cX \rightarrow \cY$ is $(\epsilon, \delta)$-differentially private if, for any measurable set $S \subseteq \cY$ and any neighboring inputs $x \sim x'$, 
\begin{equation}
\label{eq:dp}
\Pr[A(x) \in S] \leq e^\epsilon \Pr[A(x') \in S] + \delta.
\end{equation}
If $\delta = 0$, we say $A$ is $\diffp$-DP or simply \emph{pure DP}.  
\end{definition}

One of the classical pure DP algorithms is the Laplace mechanism, which adds Laplace noise to a statistic.  However, to determine the scale of noise to add to the statistic to ensure DP, we must know its \emph{sensitivity}.  We then define the $\ell_p$-sensitivity of a statistic $f:\cX \to \R^d$ that takes a dataset $x \in \cX$ to a real vector in $\R^d$ as the following where the max is taken over neighboring $x, x'\in \cX$
\[
\Delta_p(f) = \max_{x \sim x'} \left\{ ||f(x) - f(x')||_p \right\}.
\]
We then have the following privacy guarantee for the Laplace mechanism.
\begin{theorem}[\citet{DworkMcNiSm06}]
\label{thm:Laplace}
Let $f: \cX \to \R^d$ have $\ell_1$-sensitivity $\Delta_1(f)$, then the mechanism $M: \cX \to \R^d$ where $M(x) = f(x) + (Z_1, \cdots, Z_d)$ with $\{Z_i\}\stackrel{i.i.d.}{\sim} \mathrm{Lap}(\Delta_1(f)/\diffp)$ is $\diffp$-DP for $\diffp>0$.
\end{theorem}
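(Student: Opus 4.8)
The plan is to reduce the set-based definition in~\eqref{eq:dp} to a pointwise bound on the ratio of output densities, which is the standard route for Laplace-type mechanisms. First I would write down the density of $M(x)$ at an arbitrary point $y \in \R^d$. Since the coordinates $Z_i$ are independent with $Z_i \sim \mathrm{Lap}(b)$ for $b = \Delta_1(f)/\diffp$, and the $\mathrm{Lap}(b)$ density is $z \mapsto \tfrac{1}{2b}e^{-|z|/b}$, the density of $M(x)$ factorizes across coordinates as $p_x(y) = \prod_{i=1}^d \tfrac{1}{2b}\exp\!\big(-|y_i - f(x)_i|/b\big)$. Note this density is strictly positive everywhere, so ratios are always well-defined.

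Next, for neighboring inputs $x \sim x'$, I would bound the ratio $p_x(y)/p_{x'}(y)$ pointwise in $y$. Taking the product over coordinates, this ratio equals $\exp\!\Big(\tfrac{1}{b}\sum_{i=1}^d \big(|y_i - f(x')_i| - |y_i - f(x)_i|\big)\Big)$. The key step is the reverse triangle inequality applied coordinatewise, $|y_i - f(x')_i| - |y_i - f(x)_i| \le |f(x)_i - f(x')_i|$, which yields the bound $\exp\!\big(\tfrac{1}{b}\|f(x) - f(x')\|_1\big)$. By the definition of $\ell_1$-sensitivity, $\|f(x) - f(x')\|_1 \le \Delta_1(f)$, and substituting $b = \Delta_1(f)/\diffp$ gives $p_x(y)/p_{x'}(y) \le e^{\diffp}$ for every $y \in \R^d$.

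Finally I would integrate this pointwise inequality over an arbitrary measurable set $S \subseteq \R^d$: $\Pr[M(x) \in S] = \int_S p_x(y)\,dy \le e^{\diffp}\int_S p_{x'}(y)\,dy = e^{\diffp}\Pr[M(x') \in S]$, which is exactly~\eqref{eq:dp} with $\delta = 0$, i.e. pure $\diffp$-DP, and since the argument is symmetric in $x, x'$ it holds for all neighboring pairs.

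I do not anticipate a genuine obstacle; the only point that warrants care is that the guarantee must hold simultaneously for \emph{all} measurable $S$, which is precisely why reducing to the pointwise density-ratio bound (rather than arguing set by set) is the right move. If one preferred to avoid densities, an alternative would be to establish the inequality on axis-aligned boxes and extend to the Borel $\sigma$-algebra by a $\pi$–$\lambda$ argument, but the direct density computation is cleaner and makes the role of the $\ell_1$-sensitivity transparent.
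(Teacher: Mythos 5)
Your proof is correct and is the standard argument for the Laplace mechanism: the pointwise density-ratio bound via the reverse triangle inequality, followed by integration over an arbitrary measurable set. The paper states this theorem as a cited classical result from \citet{DworkMcNiSm06} without including a proof, so there is nothing to compare against beyond noting that your argument is exactly the canonical one.
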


Another classical pure DP mechanism is the Exponential Mechanism, which takes a quality score $q: \cX \times \cY \to \R$ mapping a dataset and outcome to a real value and the goal is to return outcomes that have a high quality score.  We will define the \emph{range} of a quality score to be the following where the max over datasets is over neighbors $x,x' \in \cX$,\footnote{The original Exponential Mechanism was presented in terms of a quality scores \emph{sensitivity}, while recent work from \cite{DongDuRo19} showed that the Exponential Mechanism is more naturally defined in terms of the quality score's range.  See also Jinshuo Dong's blog post \url{https://dongjs.github.io/2020/02/10/ExpMech.html}}
\[
\Delta(q) = \max_{y, y' \in \cY}\max_{x\sim x' } \{ (q(x,y) - q(x', y)) - (q(x,y') - q(x', y'))  \}.
\]
We then have the following privacy guarantee of the Exponential Mechanism.
\begin{theorem}[\citet{McSherryTa07}]
\label{thm:ExpMech}
Let $q: \cX \times \cY \to \R$ be a quality score with sensitivity $\Delta(q)$, then the mechanism $M: \cX \to \cY$ is $\diffp$-DP for any $\diffp>0$ where
\[
\Pr[M(x) = y] \propto \exp\left( \frac{\diffp q(x,y)}{\Delta(q)} \right)
\]
\end{theorem}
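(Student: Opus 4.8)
The plan is to establish the pointwise likelihood-ratio bound $\Pr[M(x) = y] \le e^{\diffp}\,\Pr[M(x') = y]$ for every outcome $y \in \cY$ and every pair of neighbors $x \sim x'$, from which $(\diffp,0)$-DP follows immediately by summing (or, in the continuous case, integrating) over $y \in S$. Writing the normalizing constant as $N(x) = \sum_{y' \in \cY} \exp\!\big(\diffp\, q(x,y')/\Delta(q)\big)$ — which we may assume is finite, since the statement presupposes that $M(x)$ is a well-defined distribution — the ratio $\Pr[M(x) = y]/\Pr[M(x') = y]$ factors as $\exp\!\big(\diffp(q(x,y) - q(x',y))/\Delta(q)\big)\cdot \big(N(x')/N(x)\big)$, so the whole task reduces to bounding the ratio of normalizers.

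The key step is to control $N(x')/N(x)$ using the definition of $\Delta(q)$. Unpacking that definition, for the fixed outcome $y$ and every summation index $y'$ we have $(q(x,y) - q(x',y)) - (q(x,y') - q(x',y')) \le \Delta(q)$, which rearranges to $q(x',y') - q(x,y') \le \Delta(q) - (q(x,y) - q(x',y))$. Multiplying by $\diffp/\Delta(q)$, exponentiating, and summing this bound over all $y' \in \cY$ gives $N(x') \le \exp\!\big(\diffp\,(\Delta(q) - (q(x,y) - q(x',y)))/\Delta(q)\big)\, N(x)$. Substituting this into the factorization above, the term $\exp\!\big(\diffp(q(x,y) - q(x',y))/\Delta(q)\big)$ cancels exactly against the matching term in the bound on $N(x')/N(x)$, leaving $\Pr[M(x) = y]/\Pr[M(x') = y] \le e^{\diffp}$, as desired.

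This exact cancellation is precisely why phrasing the guarantee in terms of the range $\Delta(q)$ rather than the quality score's sensitivity gives the clean $e^\diffp$ bound: a sensitivity-based argument would instead bound $q(x,y) - q(x',y)$ and $q(x',y') - q(x,y')$ separately, each by the sensitivity, losing a factor of two in the exponent (which is why the classical statement runs the mechanism at scale $\diffp/2$). I expect the only genuine subtlety to be bookkeeping — applying the $\Delta(q)$ inequality with the indices in the correct roles, distinguishing the ``held'' outcome $y$ from the summation variable $y'$ — together with the routine remark that the argument passes unchanged from discrete $\cY$ to a general measurable outcome space by replacing sums with integrals against a base measure.
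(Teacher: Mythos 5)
Your proof is correct, and it is the standard argument for the range-based (``bounded range'') formulation of the Exponential Mechanism: factor the likelihood ratio into the quality-score term times the ratio of normalizers, bound $N(x')/N(x)$ via the definition of $\Delta(q)$ with the held outcome $y$ and summation index $y'$ in the right roles, and observe the exact cancellation that yields $e^{\diffp}$ without the classical factor of two. The paper itself states this theorem as a citation to \citet{McSherryTa07} (with the range-based refinement attributed to the reference in its footnote) and gives no proof, so there is nothing to compare against; your argument matches the one in those sources.
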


We now define approximate concentrated differential privacy (CDP),\footnote{Although \cite{BunSt16} defines zCDP, to differentiate between CDP from \cite{DworkRo16}, we will use CDP to be the version from \cite{BunSt16}} which differs slightly from the original definition but was later shown \cite{WhitehouseRaRoWu22} to be equivalent to the original version.  Similar to approximate DP, it permits a small probability of unbounded R\'enyi divergence. 

\begin{definition}[\citet{BunSt16, PapernotSt22}]\label{def:approxazcdp}
Suppose $A: \cX \to \cY$ and $\rho, \delta\geq 0$. We say the algorithm $A$ is $\delta$-approximate  $\rho$-CDP if, for any neighboring datasets $x, x'$, there exist distributions $P', P'', Q', Q''$ such that the outputs are distributed according to the following mixture distributions:
\begin{align*}
 A(x) \sim (1 - \delta)P' + \delta P''  \qquad A(x') \sim (1 - \delta)Q' + \delta Q'',
\end{align*}
where for all $\lambda\geq 1$, $D_\lambda(P' \| Q') \leq \rho \lambda$ and $D_\lambda(Q' \| P') \leq \rho \lambda$.
\end{definition}

We can also convert approximate differential privacy to approximate CDP and vice versa.
\begin{theorem}[\citet{BunSt16}]
\label{thm:CDPtoDP}
If $A$ is $(\diffp, \delta)$-DP then it is $\delta$-approximate $\diffp^2/2$-CDP.  If $A$ is $\delta$-approximate $\rho$-CDP then it is $(\rho + 2\sqrt{\rho \log(1/\delta')}, \delta' + \delta)$-DP for any $\delta'>0$.  
\end{theorem}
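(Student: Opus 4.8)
The statement is a two-sided equivalence-style result, so I would prove the two implications separately; in both directions the workhorse is the \emph{mixture characterization} of approximate DP combined with the classical correspondence between pure $\diffp$-DP and $\frac{\diffp^2}{2}$-CDP.

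\textbf{From approximate DP to approximate CDP.} First I would invoke the well-known mixture characterization of $(\diffp,\delta)$-DP: if $A$ is $(\diffp,\delta)$-DP, then for every pair of neighbors $x\sim x'$ one can write $A(x) \sim (1-\delta)P' + \delta P''$ and $A(x') \sim (1-\delta)Q' + \delta Q''$ where the ``cores'' $P',Q'$ are pure-$\diffp$ indistinguishable, i.e. $e^{-\diffp}Q'(S) \le P'(S) \le e^\diffp Q'(S)$ for all $S$. It then remains to show that such a pure-$\diffp$-DP pair is $\frac{\diffp^2}{2}$-CDP, namely $D_\lambda(P'\|Q')\le\frac{\diffp^2}{2}\lambda$ and symmetrically for all $\lambda\ge1$. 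Writing $Z = \log\big(P'(y)/Q'(y)\big)$ with $y\sim Q'$, we have $Z\in[-\diffp,\diffp]$ and $\mathbb{E}_{Q'}[e^Z]=1$, and $D_\lambda(P'\|Q') = \tfrac{1}{\lambda-1}\log\mathbb{E}_{Q'}[e^{\lambda Z}]$, so the claim reduces to the moment inequality $\mathbb{E}_{Q'}[e^{\lambda Z}] \le \exp\big(\tfrac{1}{2}\lambda(\lambda-1)\diffp^2\big)$. Since $u\mapsto u^\lambda$ is convex for $\lambda\ge1$, among all distributions of $e^Z$ supported on $[e^{-\diffp},e^\diffp]$ with mean $1$ the quantity $\mathbb{E}[e^{\lambda Z}]$ is maximized by a two-point distribution on the endpoints, and for that extremal distribution the inequality is a one-variable calculus check. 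Substituting the decomposition together with this CDP bound for the cores into Definition~\ref{def:approxazcdp} shows $A$ is $\delta$-approximate $\frac{\diffp^2}{2}$-CDP.

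\textbf{From approximate CDP to approximate DP.} Here I would start directly from Definition~\ref{def:approxazcdp}: fix neighbors and the guaranteed decomposition $A(x)\sim(1-\delta)P'+\delta P''$, $A(x')\sim(1-\delta)Q'+\delta Q''$ with $D_\lambda(P'\|Q'), D_\lambda(Q'\|P')\le\rho\lambda$ for all $\lambda\ge1$. The heart is a Chernoff-style tail bound on the core's privacy loss: for $y\sim P'$ and $\lambda>1$, applying Markov to $e^{(\lambda-1)\log(P'(y)/Q'(y))}$ gives $\Pr_{y\sim P'}\big[\log(P'(y)/Q'(y))>\diffp\big] \le \exp\big((\lambda-1)D_\lambda(P'\|Q') - (\lambda-1)\diffp\big) \le \exp\big((\lambda-1)(\rho\lambda-\diffp)\big)$. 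Optimizing the exponent over $\lambda$ (the minimizer is $\lambda=\frac{\diffp+\rho}{2\rho}$, which lies in the admissible range $\lambda\ge1$ exactly when $\diffp\ge\rho$) makes the right-hand side at most $\delta'$ precisely when $\diffp\ge\rho+2\sqrt{\rho\log(1/\delta')}$. Combining this with the elementary fact that a privacy-loss tail bound of $\delta'$ at level $\diffp$ implies $P'(S)\le e^\diffp Q'(S)+\delta'$ for all $S$ (split $S$ on whether the loss exceeds $\diffp$), and the symmetric statement with $P',Q'$ swapped, the cores are $(\diffp,\delta')$-DP. Finally I would lift through the mixture: $\Pr[A(x)\in S] = (1-\delta)P'(S)+\delta P''(S) \le (1-\delta)\big(e^\diffp Q'(S)+\delta'\big)+\delta \le e^\diffp\Pr[A(x')\in S]+\delta'+\delta$, using $\Pr[A(x')\in S]\ge(1-\delta)Q'(S)$, which is the asserted $(\diffp,\delta'+\delta)$-DP.

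\textbf{Where the work lies.} In the first direction the conceptual obstacle is the mixture-decomposition characterization of approximate DP; after that, everything is the classical pure-DP-to-CDP estimate, whose only subtlety is the two-point reduction needed to nail the constant $\tfrac12$ rather than something like $\tfrac{\lambda}{2(\lambda-1)}$, and checking that the two-sided form of $(\diffp,\delta)$-DP is what licenses cores of equal mass $1-\delta$. In the second direction there is no conceptual obstacle, but the parameter bookkeeping is the crux: one must run the tail bound with a free R\'enyi order $\lambda$, solve the resulting quadratic in $\diffp$ to land exactly on $\rho+2\sqrt{\rho\log(1/\delta')}$, and verify the optimal $\lambda$ remains $\ge1$.
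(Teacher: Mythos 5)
The paper does not prove Theorem~\ref{thm:CDPtoDP}; it cites it directly from \citet{BunSt16}, and your argument is precisely the one given there: the mixture characterization of $(\diffp,\delta)$-DP plus the Hoeffding-style two-point bound for the first direction, and the Markov/R\'enyi tail bound on the privacy loss with the optimal order $\lambda=\frac{\diffp+\rho}{2\rho}$ for the second. Both directions check out, including the parameter bookkeeping and the lift through the $(1-\delta)/\delta$ mixture, so this is a correct proof taking essentially the same (standard) route.
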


The classical CDP mechanism is the Gaussian Mechanism.  Note that the Gaussian Mechanism was originally introduced as satisfying approximate DP, but it was then shown to satisfy pure CDP in later work \cite{DworkRo16, BunSt16}.  

\begin{theorem}[\citet{BunSt16}]
\label{thm:Gaussian}
Let $f: \cX \to \R^d$ have $\ell_2$-sensitivity $\Delta_2(f)$, then the mechanism $M: \cX \to \R^d$ where $M(x) = f(x) + (Z_1, \cdots, Z_d)$ with $\{Z_i \} \stackrel{i.i.d.}{\sim} \mathrm{N}(0, \tfrac{\Delta_2(f)^2}{2\rho})$ is $\rho$-CDP for $\rho>0$.
\end{theorem}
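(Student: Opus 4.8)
The plan is to unwind Definition~\ref{def:approxazcdp} in the case $\delta = 0$: being $0$-approximate $\rho$-CDP means that for every pair of neighboring inputs $x \sim x'$ we may take the trivial mixtures $P' = P'' = $ law of $M(x)$ and $Q' = Q'' = $ law of $M(x')$, so the whole claim reduces to showing
\[
D_\lambda\big(M(x)\,\|\,M(x')\big) \leq \rho\lambda \quad\text{and}\quad D_\lambda\big(M(x')\,\|\,M(x)\big) \leq \rho\lambda \qquad \text{for all } \lambda \geq 1.
\]
By the symmetry of the construction under swapping $x$ and $x'$, it suffices to bound one of these.

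By construction $M(x) \sim \mathrm{N}(f(x), \sigma^2 I_d)$ and $M(x') \sim \mathrm{N}(f(x'), \sigma^2 I_d)$ with $\sigma^2 = \Delta_2(f)^2 / (2\rho)$; crucially the two covariances coincide and the coordinates are independent. The technical core is the closed form for the R\'enyi divergence between two Gaussians with a common covariance: writing $\mu = f(x)$ and $\mu' = f(x')$,
\[
D_\lambda\big(\mathrm{N}(\mu,\sigma^2 I_d)\,\|\,\mathrm{N}(\mu',\sigma^2 I_d)\big) = \frac{\lambda \, \|\mu - \mu'\|_2^2}{2\sigma^2}.
\]
I would prove this by first noting that R\'enyi divergence tensorizes over independent coordinates, which reduces the identity to the one-dimensional case; there one expands $\int p(t)^\lambda q(t)^{1-\lambda}\,dt$, completes the square in $t$, and observes that the leftover term in the exponent is exactly $\tfrac{\lambda(\lambda-1)(\mu-\mu')^2}{2\sigma^2}$, so that $\tfrac{1}{\lambda-1}\log(\cdot)$ collapses to $\tfrac{\lambda(\mu-\mu')^2}{2\sigma^2}$; summing over the $d$ coordinates recovers the multivariate form. (The identity in fact holds for all $\lambda > 0$, agreeing with the KL divergence at the $\lambda = 1$ endpoint, so the restriction $\lambda \geq 1$ in the definition is no obstruction.)

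To finish, I would plug in the two remaining facts: $\|\mu - \mu'\|_2 = \|f(x) - f(x')\|_2 \leq \Delta_2(f)$ by the definition of $\ell_2$-sensitivity, and $\sigma^2 = \Delta_2(f)^2/(2\rho)$, giving
\[
D_\lambda\big(M(x)\,\|\,M(x')\big) = \frac{\lambda\,\|f(x)-f(x')\|_2^2}{2\sigma^2} \leq \frac{\lambda\,\Delta_2(f)^2}{2\,\Delta_2(f)^2/(2\rho)} = \rho\lambda,
\]
and the same bound with $x$, $x'$ interchanged. This verifies Definition~\ref{def:approxazcdp} with $\delta = 0$, i.e. $M$ is $\rho$-CDP. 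The only real content is the Gaussian R\'enyi-divergence identity; the rest is bookkeeping. I expect the one spot deserving care is confirming that identity over the full continuum $\lambda \geq 1$ (rather than merely integer orders) and that sensitivity enters through the Euclidean norm of $f(x) - f(x')$ at the worst-case neighbors, which is precisely what $\Delta_2(f)$ encodes.
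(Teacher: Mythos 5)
The paper states Theorem~\ref{thm:Gaussian} as an imported result from \citet{BunSt16} and gives no proof of its own, so there is nothing to diverge from; your argument is correct and is exactly the standard proof from that reference. The closed form $D_\lambda(\mathrm{N}(\mu,\sigma^2 I_d)\|\mathrm{N}(\mu',\sigma^2 I_d)) = \lambda\|\mu-\mu'\|_2^2/(2\sigma^2)$, obtained by tensorization and completing the square (the cross term is $\lambda(\lambda-1)(\mu-\mu')^2/(2\sigma^2)$), combined with $\|f(x)-f(x')\|_2\le\Delta_2(f)$ and $\sigma^2=\Delta_2(f)^2/(2\rho)$, gives $\rho\lambda$ for all $\lambda\ge 1$, which is pure $\rho$-CDP (Definition~\ref{def:approxazcdp} with $\delta=0$ and the trivial mixtures, as you note).
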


Note that we can apply Theorem~\ref{thm:CDPtoDP} to conclude that the Exponential Mechanism is $\diffp$-DP and hence $\diffp^2/2$-CDP, but work from \citet{CesarRo20} showed that the Exponential Mechanism actually has a better CDP parameter.\footnote{Also see the previous blog post at \url{https://differentialprivacy.org/exponential-mechanism-bounded-range/}}
\begin{theorem}[\citet{CesarRo20}]
\label{thm:ExpMechBR}
Let $q: \cX \times \cY \to \R$ be a quality score with sensitivity $\Delta(q)$, then the mechanism $M: \cX \to \cY$ is $\diffp^2/8$-CDP for any $\diffp>0$ where
\[
\Pr[M(x) = y] \propto \exp\left( \frac{\diffp q(x,y)}{\Delta(q)} \right)
\]
\end{theorem}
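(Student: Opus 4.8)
The plan is to factor the argument into two independent steps. First I would show that the Exponential Mechanism enjoys a \emph{bounded range} property: for neighboring $x \sim x'$, the privacy loss random variable $L(y) = \ln\frac{\Pr[M(x)=y]}{\Pr[M(x')=y]}$ takes values in an interval of width at most $\diffp$. Second, I would prove the general fact that any mechanism whose privacy loss has range at most $\diffp$ (in both directions) is $\diffp^2/8$-CDP in the sense of Definition~\ref{def:approxazcdp} with $\delta = 0$. Combining the two gives the theorem.

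For the bounded-range step, write out the mechanism's mass function to get $L(y) = \frac{\diffp}{\Delta(q)}\big(q(x,y) - q(x',y)\big) - C$, where $C = \ln\frac{\sum_{y''}\exp(\diffp q(x,y'')/\Delta(q))}{\sum_{y''}\exp(\diffp q(x',y'')/\Delta(q))}$ is a normalizing term that does not depend on $y$. Hence for any outcomes $y, y'$ we have $L(y) - L(y') = \frac{\diffp}{\Delta(q)}\big[(q(x,y)-q(x',y)) - (q(x,y')-q(x',y'))\big]$, which by the definition of $\Delta(q)$ lies in $[-\diffp,\diffp]$. Thus $L$, as a function of $y$, is confined to an interval of width at most $\diffp$. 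Moreover the defining expression for $\Delta(q)$ is invariant under swapping $x \leftrightarrow x'$ together with $y \leftrightarrow y'$, so the same conclusion holds verbatim for the ordered pair $(x',x)$; this symmetry is what will let me get both Rényi directions.

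For the second step, set $P = M(x)$, $Q = M(x')$, and view $L = \ln(P/Q)$ as a random variable. Since all outcome probabilities are positive, $P$ and $Q$ share the same support, $L$ is supported on an interval of width $\le \diffp$, and $\mathbb{E}_P[e^{-L}] = \mathbb{E}_Q[e^{L}] = 1$. I would apply Hoeffding's lemma — for $X$ confined to an interval of width $w$, $\ln\mathbb{E}[e^{tX}] \le t\,\mathbb{E}[X] + t^2 w^2/8$ — twice. Applying it to $-L$ under $P$ with $t = 1$ and using $\mathbb{E}_P[e^{-L}] = 1$ yields $\mathbb{E}_P[L] = D_{\mathrm{KL}}(P\|Q) \le \diffp^2/8$, which already handles $\lambda = 1$. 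Applying it to $L$ under $P$ with $t = \lambda - 1 \ge 0$ yields $\ln\mathbb{E}_P[e^{(\lambda-1)L}] \le (\lambda-1)\mathbb{E}_P[L] + (\lambda-1)^2 \diffp^2/8$. Since $D_\lambda(P\|Q) = \frac{1}{\lambda-1}\ln\mathbb{E}_P[e^{(\lambda-1)L}]$, dividing by $\lambda - 1$ and inserting the KL bound gives $D_\lambda(P\|Q) \le \mathbb{E}_P[L] + (\lambda-1)\diffp^2/8 \le \lambda \diffp^2/8$. Running the identical argument on $(x',x)$, which we observed also has bounded range $\diffp$, gives $D_\lambda(Q\|P) \le \lambda \diffp^2/8$, so $M$ is $\diffp^2/8$-CDP.

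The main obstacle is obtaining the constant $1/8$ rather than the $1/2$ that Theorem~\ref{thm:CDPtoDP} would give from plain $\diffp$-DP; this improvement is exactly what bounded range buys, and the key insight is that Hoeffding's lemma sees only the \emph{width} of the interval containing $L$, not its position — the $y$-independent term $C$ shifts $L$ but never widens its range, whereas a naive pure-DP analysis would only know $|L| \le \diffp$, an interval of width $2\diffp$. A minor technical point is justifying the identity $D_\lambda(P\|Q) = \frac{1}{\lambda-1}\ln\mathbb{E}_P[e^{(\lambda-1)L}]$ and the finiteness of the moments, which is immediate for finite $\cY$ and otherwise needs only $\Delta(q) < \infty$. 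As an alternative to the second Hoeffding step one can set $g(\lambda) = \ln\mathbb{E}_Q[e^{\lambda L}]$, note $g(0) = g(1) = 0$ and $g$ convex, bound $g''(\lambda) = \mathrm{Var}_{Q_\lambda}(L) \le \diffp^2/4$ by Popoviciu's inequality, and Taylor-expand about $\lambda = 1$; this is essentially equivalent but makes the role of the variance bound explicit.
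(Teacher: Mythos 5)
Your proposal is correct: the reduction to a bounded-range property of the privacy loss (an interval of width $\diffp$, not $2\diffp$) followed by Hoeffding's lemma applied at $t=1$ to pin down $\mathbb{E}_P[L]$ and at $t=\lambda-1$ to bound the R\'enyi divergence is exactly the argument of \citet{CesarRo20}, which this paper imports as a black box without reproving. No gaps; the only technicalities (shared support, finiteness of moments) are the ones you already flag and are immediate for finite $\cY$.
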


We also state the composition property of CDP, showing that the overall privacy parameters degrade after multiple CDP algorithms are run on a dataset.
\begin{theorem}
\label{thm:compCDP}
Let $A_1: \cX \to \cY$ be $\delta_1$-approximate $\rho_1$-CDP and $A_2: \cX \times \cY \to \cZ$ where $A_2(\cdot, y)$ is $\delta_2$-approximate $\rho_2'$-CDP for all $y \in \cY$.  Then $A: \cX \to \cZ$ where $A(x) = A_2(x, A_1(x))$ is $(\delta_1 + \delta_2 - \delta_1 \cdot \delta_2)$-approximate $\rho_1 + \rho_2$-CDP.
\end{theorem}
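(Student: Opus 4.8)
The plan is to work directly with the mixture characterization of approximate CDP in Definition~\ref{def:approxazcdp}. Fix neighboring datasets $x \sim x'$. By hypothesis there are distributions $P', P'', Q', Q''$ on $\cY$ with $A_1(x) \sim (1-\delta_1)P' + \delta_1 P''$, $A_1(x') \sim (1-\delta_1)Q' + \delta_1 Q''$ and $D_\lambda(P'\|Q'), D_\lambda(Q'\|P') \le \rho_1\lambda$ for all $\lambda \ge 1$; and for every $y \in \cY$ there are distributions $R_y', R_y'', S_y', S_y''$ on $\cZ$ with $A_2(x,y) \sim (1-\delta_2)R_y' + \delta_2 R_y''$, $A_2(x',y)\sim (1-\delta_2)S_y' + \delta_2 S_y''$ and $D_\lambda(R_y'\|S_y'), D_\lambda(S_y'\|R_y') \le \rho_2\lambda$. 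I would make the mixtures explicit with latent selector bits: let $B_1\in\{0,1\}$, $\Pr[B_1=1]=\delta_1$, with $A_1(x)\mid B_1=0 \sim P'$ and $A_1(x)\mid B_1=1 \sim P''$; and, using the internal randomness of $A_2$ on input $y$, a bit $B_2\in\{0,1\}$ with $\Pr[B_2=1]=\delta_2$ and $A_2(x,y)\mid B_2=0 \sim R_y'$. Since $A_2$'s internal coins are independent of $A_1$'s, $B_2$ is independent of $(B_1, A_1(x))$, and $\Pr[B_2=0\mid A_1(x)=y]=1-\delta_2$ does not depend on $y$. Set up the analogous bits for $x'$.

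Next I would work with the ``good event'' $G = \{B_1 = 0\}\cap\{B_2 = 0\}$. By the independence just noted, $\Pr[G] = (1-\delta_1)(1-\delta_2) = 1 - (\delta_1 + \delta_2 - \delta_1\delta_2)$. Conditioned on $G$, the output $A(x) = A_2(x, A_1(x))$ is generated by first drawing $Y \sim P'$ and then drawing $Z \sim R_Y'$; let $\tilde P'$ denote the resulting law of $Z$ on $\cZ$, and let $\tilde P''$ be the law of $A(x)$ conditioned on $G^c$. Then $A(x) \sim (1-\delta)\tilde P' + \delta\tilde P''$ with $\delta = \delta_1 + \delta_2 - \delta_1\delta_2$, and symmetrically $A(x') \sim (1-\delta)\tilde Q' + \delta\tilde Q''$, where $\tilde Q'$ is the law of $Z'$ obtained by drawing $Y' \sim Q'$ and then $Z' \sim S_{Y'}'$. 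It remains only to bound the Rényi divergences between $\tilde P'$ and $\tilde Q'$.

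For that step I would combine two standard facts about Rényi divergence. First, $\tilde P'$ (resp. $\tilde Q'$) is a postprocessing — marginalization over the $Y$-coordinate — of the joint law $\mu$ of $(Y,Z)$ (resp. $\nu$ of $(Y',Z')$), so by the data-processing inequality $D_\lambda(\tilde P'\|\tilde Q') \le D_\lambda(\mu\|\nu)$. Second, the chain rule for Rényi divergence: writing $\mu$ as the distribution with first marginal $P'$ and conditional $R_y'$ given $Y=y$, and $\nu$ analogously from $Q'$ and $S_y'$, a direct computation with $e^{(\lambda-1)D_\lambda(\cdot\|\cdot)} = \sum (\cdot)^\lambda(\cdot)^{1-\lambda}$ factors the sum over $Y$ and then pulls out a supremum, giving
\[
D_\lambda(\mu\|\nu) \le D_\lambda(P'\|Q') + \sup_{y\in\cY} D_\lambda(R_y'\|S_y') \le \rho_1\lambda + \rho_2\lambda = (\rho_1+\rho_2)\lambda.
\]
Swapping the roles of $x$ and $x'$ gives $D_\lambda(\tilde Q'\|\tilde P') \le (\rho_1+\rho_2)\lambda$, which is exactly the condition in Definition~\ref{def:approxazcdp} for $A$ to be $\delta$-approximate $(\rho_1+\rho_2)$-CDP.

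I expect the only real subtleties to be two bookkeeping points rather than any deep difficulty. The first is the chain-rule inequality for Rényi divergence under adaptive composition; it is elementary once one writes the divergence in the ``moment'' form and uses $\lambda - 1 > 0$ to preserve the inequality when bounding the inner term by its supremum over $y$ (the $\lambda=1$, i.e. KL, case holds with an average in place of the sup). The second is measure-theoretic: one needs the decompositions $y \mapsto (R_y', R_y'', S_y', S_y'')$ to be chosen measurably in $y$ so that $\tilde P', \tilde Q'$ and the joint laws $\mu,\nu$ are well defined; this is the usual measurable-selection caveat, which I would either dispatch via a standard selection theorem or simply note holds trivially in the discrete-output settings relevant to the applications in this survey.
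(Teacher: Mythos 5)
The paper states Theorem~\ref{thm:compCDP} without proof (it is imported from the approximate zCDP literature), so there is no in-paper argument to compare against; your proposal is a correct, self-contained proof and follows the standard route for adaptive composition of approximate CDP. The two ingredients you identify are exactly the right ones: the product structure of the good event giving $1-(1-\delta_1)(1-\delta_2)=\delta_1+\delta_2-\delta_1\delta_2$, and the R\'enyi chain rule $D_\lambda(\mu\|\nu)\le D_\lambda(P'\|Q')+\sup_y D_\lambda(R'_y\|S'_y)$ followed by data processing to pass to the $\cZ$-marginals. One small point worth making explicit: the claim that $B_2$ can be taken independent of $(B_1, A_1(x))$ with $\Pr[B_2=0\mid A_1(x)=y]=1-\delta_2$ for every $y$ relies on the fact that the mixture weight in the decomposition of $A_2(\cdot,y)$ can be taken to be exactly $\delta_2$ uniformly in $y$ (if the natural decomposition for some $y$ uses weight $\delta<\delta_2$, fold the excess $(\delta_2-\delta)R'_y$ into the bad component $R''_y$); with that normalization your coupling and the resulting mixture $A(x)\sim(1-\delta)\tilde P'+\delta\tilde P''$ are exactly as claimed. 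Your two flagged caveats (strictness of $\lambda>1$ when pulling out the supremum, with the KL case handled by averaging, and measurable selection of $y\mapsto(R'_y,S'_y)$) are indeed the only technical loose ends, and both are dispatched as you describe.
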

\section{Unifying Framework}

We now present a general framework that can be used to unify some of the previous analyses for proving approximate DP (CDP) for various algorithms.  If we want to show an algorithm $A$ is approximate CDP, we need to consider the randomness in $A$ that can generate differentiating outcomes, where if we are given two inputs $x$ and $x'$, we see an outcome of $A$ that could have only come from one of them.  We want to be able to show that the chance that randomness in $A$ can generate these bad outcomes is at most $\delta$.  Furthermore, we want to show that for all other non-differentiating outcomes, there are related mechanisms that match $A$ with inputs $x$ and $x'$.  Lastly, we need to show that these related mechanisms satisfy pure CDP.

To be more precise, the following lemma can be used to prove many different algorithms that are approximate DP can also be proven to be approximate CDP directly, without needing to resort to the general approximate DP conversion to approximate CDP conversion.

\begin{lemma}
\label{lem:unify}
Let $A: \cX \to \cY$ be a randomized algorithm and fix parameters $\rho, \delta \geq 0$.  If for each neighboring datasets $x,x'$ the algorithm $A$ satisfies the following conditions, then it is $\delta$-approximate $\rho$-CDP:

For each neighboring datasets $x,x'$, we have the following three conditions
\begin{itemize}
    \item[1.] There exists events $E, E'$ such that 
    \[
    \Pr_{A(x)}[E], \Pr_{A(x')}[E'] \geq 1-\delta.
    \]
     Let $S$ be the corresponding outcomes of both $A(x)$ conditioned on events $E$ and $A(x')$ conditioned on $E'$.  
    \item[2.] There exists distributions $P'$ and $Q'$ with common support $S$,  such that for all $y \in S$ we have the following where $P$, $Q$ are the distributions for $A(x)$ and $A(x')$, respectively,\footnote{We denote $P(y), P'(y), P''(y), Q(y), Q'(y), Q''(y)$ to denote the Radon-Nikodym derivative of the corresponding distribution with respect to some base measure (see a similar note in \cite{Steinke22})}
    \[
    P(y) = \1{y \in S} \cdot \Pr_{A(x)}[E] P'(y) \qquad Q(y) = \1{y \in S} \cdot\Pr_{A(x')}[E'] Q'(y)
    \]
    \item[3.] Also we have $D_\lambda(P'||Q') \leq \lambda \rho$ and $D_{\lambda}(Q'||P') \leq \lambda \rho$ for all $\lambda \geq 1$.  
\end{itemize}
\end{lemma}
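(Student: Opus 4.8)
The plan is to construct, directly from the hypotheses, the mixture decompositions demanded by Definition~\ref{def:approxazcdp}, and then read off the R\'enyi bounds from condition~3. Write $p = \Pr_{A(x)}[E]$ and $q = \Pr_{A(x')}[E']$, so $p, q \geq 1-\delta$ by condition~1, and let $P, Q$ be the laws of $A(x), A(x')$. First I would record a small consequence of condition~2: integrating the identity $P(y) = \1{y\in S}\, p\, P'(y)$ over $S$ and using that $P'$ is a probability distribution supported on $S$ gives $\Pr[A(x)\in S] = p$, and likewise $\Pr[A(x')\in S] = q$. In particular $A(x)$ places mass $1-p \le \delta$ outside $S$ and $A(x')$ places mass $1-q\le\delta$ outside $S$.

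Next I would peel off this ``bad'' mass. Let $\tilde P$ be the conditional law of $A(x)$ given $A(x)\notin S$ (well defined when $p<1$; if $p=1$, or if $\delta = 0$, the hypothesis already forces $P = P'$ and we may take $\tilde P$ arbitrary). Then condition~2 yields the exact split $P = p\,P' + (1-p)\tilde P$, and since $p \ge 1-\delta$ I can rebalance it into the required shape:
\[
P = (1-\delta)P' + \delta\Big( \tfrac{p-(1-\delta)}{\delta}\,P' + \tfrac{1-p}{\delta}\,\tilde P \Big) =: (1-\delta)P' + \delta P'',
\]
where the two coefficients in the parenthesis are nonnegative (using $1-\delta \le p \le 1$) and sum to $1$, so $P''$ is a genuine probability distribution. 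Applying the identical construction to $A(x')$, $E'$, $Q$, $Q'$ produces $Q = (1-\delta)Q' + \delta Q''$.

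Finally, condition~3 is exactly the statement that $D_\lambda(P'\|Q') \le \rho\lambda$ and $D_\lambda(Q'\|P') \le \rho\lambda$ for all $\lambda\ge 1$, which is the last requirement in Definition~\ref{def:approxazcdp}; hence $A$ is $\delta$-approximate $\rho$-CDP. I do not expect a real obstacle in the proof itself — it is a short bookkeeping argument, and all the mathematical work lives in verifying conditions~1--3 for any particular algorithm. The only points needing a little care are the rebalancing step (checking the coefficients are valid) and the degenerate cases $p=1$, $q=1$, or $\delta=0$, where the conditional ``bad'' distributions are vacuous and must simply be chosen arbitrarily.
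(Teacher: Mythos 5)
Your proposal is correct and follows essentially the same route as the paper: both construct $P'' = \frac{1}{\delta}\bigl((p-(1-\delta))P' + (1-p)\tilde P\bigr)$ with $\tilde P$ the conditional ``bad'' distribution, verify the mixture identity, and invoke condition~3 to conclude via Definition~\ref{def:approxazcdp}. Your version is marginally more careful than the paper's in checking that the mixture weights are nonnegative and in flagging the degenerate cases $p=1$ or $\delta=0$, where the paper divides by $\delta$ without comment.
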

\begin{proof}
    Fix neighbors $x,x'$.  Let $P, Q$ be the distribution for $A(x)$ and $A(x')$, respectively.  Consider the following distribution $P''$ where we write $P(\cdot \mid \neg E)$ to denote the conditional distribution of $P$ given events $E$
    \[
    P''(y) = \frac{1}{\delta} \left( P(y \mid \neg E) \Pr[\neg E]  + P'(y)\left( \Pr_{A(x)}[E] - (1-\delta) \right) \right)
    \]
    We then have that for $y \in S$
    \begin{align*}
    (1-\delta) P'(y) + \delta P''(y) = (1-\delta) P'(y) + \delta \cdot \frac{1}{\delta} \left(P'(y) \cdot (\Pr[E] - (1-\delta))  \right) = P'(y) \Pr[E] = P(y).
    \end{align*}
    Furthermore, for $y \notin S$ we have
    \[
    (1-\delta) P'(y) + \delta P''(y) = \delta \cdot \frac{1}{\delta} \left( P(y \mid \neg E) \Pr[\neg E]  \right) = P(y \mid \neg E) \Pr[\neg E] = P(y).
    \]
    Hence, we have $P(y) = (1-\delta) P'(y) + \delta P''(y)$ for all outcomes $y$.  A similar arguments works to show $Q(y) = (1-\delta) Q'(y) + (1-\delta) Q''(y)$ with $Q''(y)$ defined similar to $P''(y)$.  
    
    By assumption, we know that $D_\lambda(P' || Q') \leq \lambda \rho$ and $D_\lambda(Q' || P') \leq \lambda \rho$ for all $\lambda \geq 1$.  Hence, $A$ is approximate CDP.
\end{proof}

Although a similar lemma can be used for approximate DP, this will typically lead to unnecessarily loose privacy parameters, as we will see later.  Hence, providing an approximate CDP analysis of each algorithm will be useful in any privacy system that combines these algorithms because the CDP privacy parameters simply add up and can be converted to approximate DP parameters at the end, if necessary.   

\section{Unknown Domain Algorithms}

We will denote a histogram as $h \in \cH = \{ (i, c_i) : (i, c_i) \in [d] \times \N \}$ which consists of a set of pairs with a label $i$ in $[d]$ and its corresponding count $c_i$. When we define neighboring histograms, we will need to consider how much a user can modify a histogram.  If we remove a user's contributions from a histogram $h$, this can both remove items entirely or decrease counts by some amount.  We will say that a histogram $h$ is $(\ell_0, \ell_\infty)$-sensitive if removing or adding a user's data to histogram $h$ can change at most $\ell_0$ distinct elements and each count in $h$ can differ by at most $\ell_\infty$.  We now turn to some applications of Lemma~\ref{lem:unify} for some existing mechanisms. 

\subsection{Positive Count Histograms}

We start with the setting where we want to return a histogram subject to CDP where we are only given positive count items, hence each count present in the histogram is at least 1.  It is straightforward to extend this analysis to the case where we have a histogram with only counts above some known value larger than 1.  This is a natural setting for data analytics as GROUP BY queries in SQL only provide items that exist in the dataset, so no zero counts are returned.  This is problematic for privacy, as a neighboring histogram can have fewer results, resulting in a differing set of counts to add noise to.  We present a general template for the private algorithm in Algorithm~\ref{alg:UnkFull}, where we deliberately leave the noise distribution and the threshold arbitrary.

\begin{algorithm}
\caption{Unknown Domain Histogram}\label{alg:UnkFull}
\begin{algorithmic}
    \Require Histogram $h$, noise distribution $\mathrm{Noise}$ and threshold $T>0$
    \Ensure Noisy Histogram $\tilde{h}$ with counts above $T$
    \State Initialize $\tilde{h} = \emptyset$.
    \For{each item $i$ where $(i, c_i) \in h$ such that $c_i >0$}
        \State Set $\tilde{c}_i = c_i+ Z_i$ where $Z_i \sim \mathrm{Noise}$.
        \If{$\tilde{c}_i > T$ }
            \State $\tilde{h} = \tilde{h} \cup \left\{ (i, \tilde{c}_i) \right\}$
        \EndIf
    \EndFor
\end{algorithmic}
\end{algorithm}

Previous mechanisms follow this template, specifically from \citet{KorolovaKeMiNt09} and \citet{WilsonZhLaDeSiGi20} who used Laplace noise and  \citet{SwanbergDeHa23} who used Gaussian noise.  We now prove that Algorithm~\ref{alg:UnkFull} is approximate CDP using Lemma~\ref{lem:unify}.
\begin{theorem}
\label{thm:UnkFull}
    Assume input histograms $h$ are $(\Delta_0, \Delta_\infty)$-sensitive.  If we use $\mathrm{Noise}$ being the distribution of $|h|$ many i.i.d. $\mathrm{Lap}(\Delta_\infty / \diffp)$ and threshold 
    \[
    T = \Delta_\infty + \frac{\Delta_\infty}{\diffp} \log(\tfrac{\Delta_0}{2\delta}),
    \] 
    then Algorithm~\ref{alg:UnkFull} is $\delta$-approximate $\Delta_0\cdot \diffp^2/2$-CDP.
    Furthermore, if we use $\mathrm{Noise} = \mathrm{N}(0, \Delta_\infty^2/\diffp^2\cdot I_{|h|})$ and threshold
    \[
    T = \Delta_\infty + \frac{\Delta_\infty}{\diffp} \Phi^{-1}(1- \delta / \Delta_0),
    \] 
    where $\Phi^{-1}(z)$ is the inverse CDF of a standard normal then Algorithm~\ref{alg:UnkFull} is $\delta$-approximate $\Delta_0 \cdot \diffp^2/2$-CDP.
\end{theorem}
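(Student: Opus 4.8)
The plan is to verify the three conditions of Lemma~\ref{lem:unify} for each pair of neighboring histograms, writing $x$ and $x'$ for the two inputs. Let $I_x$ (resp.\ $I_{x'}$) be the set of items with positive count in $x$ (resp.\ $x'$). By $(\Delta_0,\Delta_\infty)$-sensitivity, the number of items in the symmetric difference of $I_x$ and $I_{x'}$, plus the number of items of $I_x \cap I_{x'}$ whose counts disagree, is at most $\Delta_0$, and every such disagreement is by at most $\Delta_\infty$. In particular each item of $I_x \setminus I_{x'}$ has count at most $\Delta_\infty$ in $x$ (removing the offending user zeroes it out), and symmetrically for $I_{x'}\setminus I_x$; these are exactly the \emph{differentiating} items, since $A(x')$ can never output an item absent from $I_{x'}$.

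Next I would define the good events. Let $E$ be the event, over the internal noise of $A(x)$, that no item of $I_x \setminus I_{x'}$ has noisy count exceeding $T$, and let $E'$ be the analogous event for $A(x')$ with $I_{x'}\setminus I_x$. Let $S$ be the set of output histograms supported on $I_x \cap I_{x'}$ with all counts above $T$; then $E = \{A(x)\in S\}$, $E' = \{A(x')\in S\}$, and $S$ is precisely the common support of $A(x)\mid E$ and $A(x')\mid E'$ (using that Laplace, resp.\ Gaussian, noise has full support on $\R$). For condition~1, a one-sided tail bound for each differentiating item followed by a union bound over the at most $\Delta_0$ of them gives $\Pr_{A(x)}[E], \Pr_{A(x')}[E'] \ge 1-\delta$: in the Laplace case $\Pr[c_i + Z_i > T] \le \Pr[Z_i > T-\Delta_\infty] = \tfrac12 e^{-\diffp (T-\Delta_\infty)/\Delta_\infty} = \delta/\Delta_0$, and the stated choice of $T$ using $\Phi^{-1}$ makes the Gaussian tail equal to $\delta/\Delta_0$ in the same way. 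Condition~2 is then immediate: take $P' = A(x)\mid E$ and $Q' = A(x')\mid E'$, so that for $y\in S$, $P(y) = \Pr_{A(x)}[A(x)\in S]\,P'(y) = \Pr_{A(x)}[E]\,P'(y)$, and symmetrically for $Q$.

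The substance is condition~3, the R\'enyi-divergence bound. The key observation is that conditioning on $E$ only constrains the (independent) noise attached to items of $I_x \setminus I_{x'}$, which never appear in outcomes of $S$; hence $P'$ is exactly the distribution obtained by running the thresholding step of Algorithm~\ref{alg:UnkFull} on the noisy counts $\{c_i + Z_i\}_{i\in I_x \cap I_{x'}}$, i.e.\ a deterministic post-processing of the product distribution $\bigotimes_{i \in I_x \cap I_{x'}} \mathrm{Lap}(c_i, \Delta_\infty/\diffp)$, and $Q'$ is the \emph{same} post-processing applied to the product with centers $c_i'$. Using the data-processing inequality for R\'enyi divergence and its additivity over independent coordinates,
\[
D_\lambda(P' \,\|\, Q') \;\le\; \sum_{i \in I_x \cap I_{x'}} D_\lambda\!\left(\mathrm{Lap}(c_i, \Delta_\infty/\diffp) \,\big\|\, \mathrm{Lap}(c_i', \Delta_\infty/\diffp)\right).
\]
Each summand vanishes when $c_i = c_i'$; otherwise the one-dimensional Laplace mechanism at that scale is $\diffp$-DP by Theorem~\ref{thm:Laplace}, hence $\diffp^2/2$-CDP by Theorem~\ref{thm:CDPtoDP}, so the summand is at most $\lambda\diffp^2/2$. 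At most $\Delta_0$ coordinates differ, giving $D_\lambda(P'\|Q') \le \Delta_0\,\lambda\diffp^2/2 = \lambda\rho$ with $\rho = \Delta_0\diffp^2/2$; the bound on $D_\lambda(Q'\|P')$ is symmetric, and Lemma~\ref{lem:unify} concludes $\delta$-approximate $\Delta_0\diffp^2/2$-CDP. The Gaussian variant is identical, using that $D_\lambda$ between two Gaussians with common covariance $\tfrac{\Delta_\infty^2}{\diffp^2}I$ and means differing by at most $\Delta_\infty$ is at most $\lambda\diffp^2/2$ (equivalently, Theorem~\ref{thm:Gaussian}). I expect the main obstacle to be getting the decoupling right in the definition of $P',Q'$ and resisting the temptation to bound the divergence through the aggregate $\ell_1$-sensitivity $\Delta_0\Delta_\infty$: that would cost an extra factor of $\Delta_0$, whereas treating the differing coordinates as a composition of $\Delta_0$ individually private mechanisms is what yields the stated parameter.
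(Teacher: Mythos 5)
Your proposal is correct and follows essentially the same route as the paper: the same choice of good events (bounding the tail of the noise on the at most $\Delta_0$ differentiating items, whose true counts are at most $\Delta_\infty$), the same factorization $P(y)=\Pr[E]P'(y)$ via independence of the per-item noise, and crucially the same per-coordinate decomposition of $D_\lambda(P'\|Q')$ over the at most $\Delta_0$ differing common coordinates rather than the aggregate $\ell_1$-sensitivity bound, which is exactly the point the paper emphasizes. The only cosmetic differences are that the paper takes $h'\subseteq h$ without loss of generality (so $E'$ is trivial and $Q\equiv Q'$) and uses a product bound $(1-\delta/\Delta_0)^{\Delta_0}\geq 1-\delta$ where you use a union bound; both yield the stated parameters.
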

\begin{proof}
    We will rely on Lemma~\ref{lem:unify} to prove this result.  Consider neighbors $h, h'$ where $h$ has one additional user's data from $h'$, without loss of generality.  By assumption, we know that $h$ can differ in at most $\Delta_0$ counts for different labels.  Furthermore, we know that in counts that are differing, they can differ by at most $\Delta_\infty$.  Consider the set $S$ to be the set of labels that are common between $h$ and $h'$ along with corresponding counts.  Since we assume that $h$ has one additional user's data from $h'$, we know that this set $S$ must include all labels of $h'$.  Note that the counts for the items that are not present in $h'$ but are present in $h$ must be at most $\Delta_\infty$.  We now cover each item in  Lemma~\ref{lem:unify}.
    
    \begin{itemize}
    \item[1.]  We define the event $E$ to be all the randomness in $A(h)$ that can generate outcomes in $S$, so that $E$ must only include the noise that is added to items that are common between $h$ and $h'$ and the noise that is added to the items' counts in $h$, but not in $h'$ must be bounded.  We then lower bound the probability of event $E$.  Note that for every item $j$ who is in $h$ just not in $S$, it's count can be no more than $\Delta_\infty$.

    \begin{align*}
        \Pr_{A(h)}[E] &= \prod_{j: (j , \cdot ) \in h \setminus h'} \Pr[c_j + \mathrm{Noise} \leq T] \\
        & \geq \prod_{j: (j , \cdot ) \in h \setminus h'} \Pr[\Delta_\infty + \mathrm{Noise} \leq T]
    \end{align*}
    We then consider the two scenarios with either Laplace noise or Gaussian noise.
    \begin{itemize}
    \item (Laplace) With Laplace noise of scale $b>0$, with $T = \Delta_\infty + b \log(\tfrac{\Delta_0}{2\delta})$, we have
    \begin{align*}
    \Pr_{A(h)}[E] \geq \left(1 - \frac{1}{2} \exp\left( - \frac{T - \Delta_\infty}{b}\right) \right)^{\Delta_0} \geq 1-\delta.
    \end{align*}
    \item (Gaussian) Next we consider the Gaussian noise version that has standard deviation $\sigma >0$ with $T = \Delta_\infty + \sigma \Phi^{-1}(1-\delta/\Delta_0)$.
    \begin{align*}
    \Pr_{A(h)}[E] \geq \Phi\left( \frac{T - \Delta_\infty}{\sigma} \right)^{\Delta_0} \geq 1-\delta.
    \end{align*}
    \end{itemize}
    For this case, the event $E'$ is all randomness in $A(h')$ that can generate outcomes in $S$, which would include all randomness in $A(h')$ because $h'$ is a subset of $h$.  
    
    \item[2.] We then consider the mechanism $A'$ whose domain is the set of common items between $h$ and $h'$ and has noise added to each count so that only noisy counts above $T$ are returned with its corresponding item label.  We write the distribution of $A'(h)$ as $P'$ and the distribution of $A'(h')$ as $Q'$. Because we add independent noise to each histogram count, we can separate out the noise terms added to the counts of labels that are not common between $h$ and $h'$, hence we know that $P(y) = \Pr[E] P'(y)$ for $y \in S$, by design.  Furthermore $Q \equiv Q'$. 
    
    \item[3.] Note that $A'$ is either the Laplace mechanism or the Gaussian mechanism over common items between $h$ and $h'$.  We then cover each variant separately.
    \begin{itemize}
    \item (Laplace) We first consider the case where we apply Laplace noise with noise parameter $b>0$.  Note that $A'$ is a Laplace Mechanism over a histogram that can change in at most $\Delta_0$ counts and each count can differ by at most $\Delta_\infty$. Ignoring the threshold in $A'$, as this is a post processing of the noisy histogram and does not impact the privacy analysis, we can then say that the Laplace mechanism is being applied to a histogram with $\ell_1$-sensitivity $\Delta_0 \Delta_\infty$.  Hence, we know that the Laplace mechanism is $\Delta_0\Delta_\infty/b$-(pure) DP and hence $(\Delta_0^2\Delta_\infty^2/b^2/2)$-(pure) CDP.  However, this will not get us the result we want because it would result in $\Delta_0^2 \diffp^2/2$-CDP with $b = \Delta_\infty/\diffp$.  This was due to using the $\ell_1$-sensitivity of the histogram and then applying Theorem~\ref{thm:Laplace}.  
 
We now consider the Laplace mechanism only on the common items between $h$ and $h'$ that also have differing counts.  We call the corresponding mechanism $\hat{A}$ and denote the distribution $\hat{P}$ for $\hat{A}(h)$ and the distribution $\hat{Q}$ for $\hat{A}(h')$.  Note that each noisy count in $\hat{A}$ is generated independently, so we can say that each count in $\hat{A}$ is a single univariate Laplace mechanism.  Each Laplace mechanism will then be $\Delta_\infty/b$-(pure) DP and hence $(\Delta_\infty/b)^2/2$-(pure) CDP.  Applying composition from Theorem~\ref{thm:compCDP} over each univariate Laplace mechanism implies that $\hat{A}$ is $\frac{\Delta_0(\Delta_\infty)^2}{2b^2}$-(pure) CDP.  Let $\tilde{A}$ denote the Laplace Mechanism applied to the counts that were unchanged between $h$ and $h'$ with distribution $\tilde{P}$ for $\tilde{A}(h) = \tilde{A}(h')$.  For ease of notation, let $h$ and $h'$ match on the first $d'\leq d$ indices and let the first $k \leq \Delta_0$ indices of $h$ and $h'$ have differing counts.  Hence, we have for outcome $y = (y_1, \cdots, y_{d'})$ that $P'(y) = \hat{P}(y_1, \cdots, y_k) \cdot \tilde{P}(y_{k+1}, \cdots, y_{d'})$.  Similarly, we have $Q'(y) = \hat{Q}(y_1, \cdots, y_k) \cdot \tilde{P}(y_{k+1}, \cdots, y_{d'})$.  This gives us the following for $\lambda \geq 1$
\[
D_\lambda(P' || Q') = D_{\lambda} \left( \hat{P} || \hat{Q}  \right) \leq \frac{\Delta_0(\Delta_\infty)^2}{2b^2}
\]
and similarly, 
\[
D_\lambda(Q' || P') \leq \frac{\Delta_0(\Delta_\infty)^2}{2b^2}.
\]


\item (Gaussian) We next consider the Gaussian noise variant.  We will denote $A'$ as the Gaussian mechanism whose domain is the set of common items between $h$ and $h'$ and has noise standard deviation $\Delta_\infty/b$ and only counts above $T$ will be returned with its corresponding item label.  We write the distribution of $A(h)$ as $P'$ and the distribution of $A'(h')$ as $Q'$. 
    
    Note that $A'$ is a Gaussian Mechanism over a histogram that can change in at most $\Delta_0$ counts and each count can differ by at most $\Delta_\infty$. Ignoring the threshold in $A'$, again this does not impact the privacy analysis, we can then say that the Gaussian mechanism is being applied to a histogram with $\ell_2$-sensitivity $\Delta_\infty \sqrt{\Delta_\infty}$.  Hence, we know that the Gaussian mechanism is $\Delta_\infty^2 \Delta_0/\sigma^2$-(pure) CDP from Theorem~\ref{thm:Gaussian}.  Furthermore, post processing the Gaussian Mechanism is also CDP with the same parameters, so restricting the noisy counts to be larger than $T$ gets us back to distributions $P'$ and $Q'$.  This gives us $D_\lambda(P' || Q') \leq \lambda \tfrac{\Delta_\infty^2\Delta_0}{2\sigma^2}$ and  $D_\lambda(Q' || P') \leq \lambda \tfrac{\Delta_\infty^2\Delta_0}{2\sigma^2}$, for all $\lambda \geq 1$.  
\end{itemize}
\end{itemize}

        Hence, using Laplace noise with scale $b>0$ in the Unknown Domain Histogram algorithm is $\delta$-approximate $\frac{\Delta_0^2\Delta_\infty^2}{2b^2}$-CDP.  Setting $b = \Delta_\infty/\diffp$ completes the proof for Laplace noise.  Furthermore, using Gaussian noise with standard deviation $\sigma$ is $\delta$-approximate $\frac{\Delta_\infty^2\Delta_0}{2\sigma^2}$-CDP and setting $\sigma = \frac{\Delta_\infty}{\diffp}$ completes the proof.

\end{proof}

We want to highlight the improvement we get when we use the CDP analysis.  If we consider a similar analysis using approximate DP, we would remove the items that cannot be returned under both neighboring datasets and we would be left with the Laplace mechanism over the common items.  We could then use the $\ell_1$-sensitivity of the resulting histogram, but then adding Laplace noise with scale $b = \Delta_\infty/\diffp$, would result in $(\diffp\Delta_0, \delta)$-DP, which can be converted to CDP using Theorem~\ref{thm:CDPtoDP} to get $\delta$-approximate $\diffp^2\Delta_0^2/2$-CDP, although we can get $\delta$-approximate $\diffp^2\Delta_0/2$-CDP in our analysis.  Furthermore, if we convert approximate CDP guarantees to approximate DP guarantees, this would lead to loose privacy parameters when developing privacy systems that use these algorithms.  For example, if we use the Gaussian noise variant in Theorem~\ref{thm:UnkFull} with $\Delta_0 = 1$,  we can conclude that it is $(\diffp^2 / 2 + \diffp \sqrt{2 \log(1/\delta) },\delta + \delta')$-DP for any $\delta'>0$, so if we only use DP guarantee and combine it with another Unknown Domain Histogram with Gaussian noise, we can compose to get $(\diffp', 2\delta + \delta')$-DP for any $\delta'>0$ where
\[
\diffp' = \diffp^2 + 2\diffp \sqrt{2 \log(1/\delta') }.
\]
However, if we had never converted to approximate DP until after composing both Unknown Domain Histogram mechanisms, we could have gotten an overall $(\diffp'', 2\delta + \delta')$-DP guarantee, where
\[
\diffp'' = \diffp^2 + 2\diffp \sqrt{ \log(1/\delta') }.
\]

\subsection{Top-$(\bar{k}+1)$ Count Histograms}

We now turn to a slight variant of releasing private histograms over a histogram with positive counts.  In this setting, we assume that only a limited part of the histogram is available, perhaps due to an existing data analytics system that cannot return all counts.  This setting first appeared in \cite{DurfeeRo19} and is especially important when designing DP algorithms on top of existing systems that cannot provide the full histogram of counts \cite{RogersLinkedIn21}.  We will refer to the histogram consisting of the top-$(\bar{k}+1)$ items as the histogram with items and corresponding counts that are in the top-$(\bar{k}+1)$.  Note that in this case, the top-$(\bar{k}+1)$ can change between neighboring histograms.  We now present a general algorithm template, similar to Algorithm~\ref{alg:UnkFull}, in Algorithm~\ref{alg:UnkTop} that takes an arbitrary threshold $T>0$ and a top-$(\bar{k}+1)$-histogram.

\begin{algorithm}
\caption{Unknown Domain from Top-$(\bar{k}+1)$}\label{alg:UnkTop}
\begin{algorithmic}
    \Require Histogram $h$, noise standard deviation $\sigma$, threshold $T>0$, and top-$(\bar{k}+1)$ histogram 
    \Ensure Noisy Histogram $\tilde{h}$ with at most $\bar{k}$ counts.
    \State Let $h_{(\bar{k}+1)}$ be the histogram consisting of the top-$(\bar{k}+1)$ items, breaking ties arbitrarily.
    \If{$h_{(\bar{k})}$ has fewer than $\bar{k}$ items}
    	\State Pad $h_{(\bar{k})}$ with items $\bot_1, \cdots, \bot_{\bar{k} - |h_{(\bar{k})}|}$ with $c_{\bot_j} = 0$ until there are $\bar{k}$ items in $h_{(\bar{k})}$.
    \EndIf
    \State Let $c_{(\bar{k}+1)}$ be the count of the $(\bar{k} + 1)$-th item in $h$, which might be zero.
    \State Set $\tilde{T} = T + c_{(\bar{k}+1)}+\mathrm{N}(0,\sigma^2)$
    \State Initialize $\tilde{h} = \emptyset$.
    \For{each item $i$ where $(i, c_i) \in h_{(\bar{k})}$}
        \State Set $\tilde{c}_i = c_i+ \mathrm{N}(0,\sigma^2)$
        \If{$\tilde{c}_i > \tilde{T}$ }
            \State $\tilde{h} = \tilde{h} \cup \left\{ (i, \tilde{c}_i) \right\}$
        \EndIf
    \EndFor
\end{algorithmic}
\end{algorithm}

We now show that it is indeed approximate CDP.

\begin{theorem}
    \label{thm:UnkTop}
    Assume input histograms $h$ are $(\Delta_0, \Delta_\infty)$-sensitive.  
    If we use $\sigma =  \frac{\Delta_\infty}{\diffp}$
    and threshold 
    \[
    T =  \Delta_\infty + \frac{\sqrt{2} \cdot \Delta_\infty }{\diffp} \Phi^{-1}(1- \delta / \Delta_0)
    \]
    then Algorithm~\ref{alg:UnkTop} is $\delta$-approximate $\Delta_0\cdot \diffp^2/2$-CDP. 
\end{theorem}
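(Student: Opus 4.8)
The plan is to apply Lemma~\ref{lem:unify}, following the template of the proof of Theorem~\ref{thm:UnkFull}, but now I have to deal with two new features: the top-$(\bar{k}+1)$ set can itself change between neighbors, and the threshold $\tilde{T}$ is noisy. Fix neighbors $h, h'$ and, without loss of generality, suppose $h$ has one extra user's data, so $c_i(h) \ge c_i(h')$ for every label $i$, these counts differ on at most $\Delta_0$ labels, and each such difference is at most $\Delta_\infty$; I also assume ties are broken consistently across neighboring histograms (e.g.\ by label), which is what makes the top-$\bar{k}$ set change only through changed counts. Write $C$ for the set of labels that appear in both top-$\bar{k}$ histograms $h_{(\bar k)}$ and $h'_{(\bar k)}$ (treating the $\bot$-padding consistently), $D = h_{(\bar k)} \setminus h'_{(\bar k)}$ and $D' = h'_{(\bar k)} \setminus h_{(\bar k)}$. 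Since both top-$\bar{k}$ histograms have exactly $\bar{k}$ labels, $|D| = |D'|$, and because a label can enter the top-$\bar{k}$ only if its count changed (or it is a newly created label, which also counts against $\Delta_0$), we get $|D| = |D'| \le \Delta_0$. Let $S$ be the set of outcomes that only ever name labels in $C$.

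Before invoking the lemma I would record three sensitivity facts. (i) Every order statistic of the count multiset is $\Delta_\infty$-sensitive, so $|c_{(\bar k+1)}(h) - c_{(\bar k+1)}(h')| \le \Delta_\infty$ (the top-$j$ items of one histogram have counts in the other that drop by at most $\Delta_\infty$, so the $j$-th order statistic drops by at most $\Delta_\infty$). (ii) For $j \in D$ we have $c_j(h) - c_{(\bar k+1)}(h) \le \Delta_\infty$ — since $c_j(h') \le c_{(\bar k+1)}(h')$, $c_j(h) \le c_j(h') + \Delta_\infty$, and $c_{(\bar k+1)}(h) \ge c_{(\bar k+1)}(h')$ — and symmetrically $c_j(h') - c_{(\bar k+1)}(h') \le \Delta_\infty$ for $j \in D'$. (iii) The key accounting step: the ``aligned'' top-$(\bar k+1)$ count vectors of $h$ and $h'$ (indexed by $h_{(\bar k+1)} \cup h'_{(\bar k+1)}$, filled in with true counts) differ in at most $\Delta_0$ coordinates, and moreover whenever the pivot count $c_{(\bar k+1)}$ itself changes, one of those at-most-$\Delta_0$ coordinate changes is already ``spent'' moving a label across the $\bar{k}\,/\,\bar{k}+1$ boundary or is the pivot label's own changed count, so the number of changed coordinates among $C$ plus $\1{c_{(\bar k+1)}(h) \neq c_{(\bar k+1)}(h')}$ is still at most $\Delta_0$.

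For condition~1 of Lemma~\ref{lem:unify}, let $E$ be the event that $A(h)$ outputs no label in $D$ and $E'$ the event that $A(h')$ outputs no label in $D'$. Writing $\tilde{T} = T + c_{(\bar k+1)}(h) + Z_T$ with $Z_T \sim \mathrm{N}(0,\sigma^2)$, the mechanism outputs $j \in D$ exactly when $c_j(h) + Z_j > \tilde{T}$, i.e.\ when $Z_j - Z_T > T + c_{(\bar k+1)}(h) - c_j(h) \ge T - \Delta_\infty$ by~(ii); since $Z_j - Z_T \sim \mathrm{N}(0, 2\sigma^2)$ this has probability at most $1 - \Phi((T-\Delta_\infty)/(\sqrt{2}\sigma))$, which equals $\delta/\Delta_0$ for the stated $\sigma = \Delta_\infty/\diffp$ and $T$ --- this is exactly where the $\sqrt{2}$ in the threshold enters. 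A union bound over the at most $\Delta_0$ labels of $D$ gives $\Pr_{A(h)}[E] \ge 1-\delta$, and the same computation using~(i) and the $D'$-version of~(ii) gives $\Pr_{A(h')}[E'] \ge 1-\delta$; the common non-differentiating outcomes are exactly $S$.

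For conditions~2 and~3 I would take $A'$ to be the Gaussian mechanism that adds i.i.d.\ $\mathrm{N}(0,\sigma^2)$ noise to the counts $(c_i)_{i \in C}$ and to the pivot count $c_{(\bar k+1)}$, forms $\tilde{T}$ from the noisy pivot count, and releases $\{(i,\tilde{c}_i) : i \in C,\ \tilde{c}_i > \tilde{T}\}$. Its output is a \emph{dataset-independent} post-processing of the Gaussian-mechanism release of the vector $\big((c_i)_{i\in C},\, c_{(\bar k+1)}\big)$, whose $\ell_2$-sensitivity is at most $\Delta_\infty \sqrt{\Delta_0}$ by fact~(iii); so Theorem~\ref{thm:Gaussian} together with post-processing gives $D_\lambda(P'\|Q'), D_\lambda(Q'\|P') \le \lambda\, \Delta_\infty^2 \Delta_0 / (2\sigma^2) = \lambda\, \Delta_0 \diffp^2/2$ once $\sigma = \Delta_\infty/\diffp$, which is condition~3 with $\rho = \Delta_0\diffp^2/2$. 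On the event $E$ the output of $A(h)$ coincides with that same post-processing applied to the noisy $C$-counts and $\tilde{T}$, so for condition~2 I must produce $P',Q'$ (with common support $S$) to which $A(h), A(h')$ reduce on $E, E'$. This is the one genuinely delicate point, and the main obstacle: unlike in Theorem~\ref{thm:UnkFull}, the threshold noise $Z_T$ is shared between the $C$-labels and the $D$-labels, so conditioning $A(h)$ on $E$ re-weights the law of $Z_T$ and $A(h)\mid E$ is not literally the clean Gaussian mechanism $A'$. I would resolve this by working conditionally on the value of $\tilde{T}$: given $\tilde{T}=t$, the $C$-part of $A(h)$ is a clean Gaussian mechanism with a \emph{deterministic} threshold $t$, exactly as in Theorem~\ref{thm:UnkFull}, and the event $E$ depends only on the $D$-noise given $t$; one then takes $P',Q'$ to have the appropriate dataset-dependent $\tilde{T}$-marginals and checks that the Rényi bounds of~(iii) survive, the only work being to absorb the differing $\tilde{T}$-marginals into the at-most-$\Delta_0$ coordinate budget (the ``$+1$'' for the pivot being free precisely because of fact~(iii)). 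With that in hand, Lemma~\ref{lem:unify} gives that Algorithm~\ref{alg:UnkTop} is $\delta$-approximate $\Delta_0\cdot\diffp^2/2$-CDP.
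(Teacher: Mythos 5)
Your setup and your handling of condition~1 match the paper's proof: the same choice of $S$ (outcomes naming only labels common to the two top-$\bar{k}$ histograms), the same bound $c_j \leq c_{(\bar{k}+1)} + \Delta_\infty$ for a differing label $j$, and the same observation that the item noise minus the threshold noise is $\mathrm{N}(0,2\sigma^2)$, which is where the $\sqrt{2}$ in $T$ comes from. Your accounting in fact~(iii) is also in the spirit of what the paper uses: it cites Lemma~6.4 of Cardoso--Rogers for the statement that, after relabeling the differing items to common labels, the joint count vector (including the pivot) still changes in at most $\Delta_0$ coordinates, each by at most $\Delta_\infty$.

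The gap is exactly at the point you flag as ``the one genuinely delicate point,'' and your proposed resolution does not close it. Because the threshold noise $Z_T$ is shared between the $C$-labels and the $D$-labels, conditioning on $E$ tilts the law of $\tilde{T}$ by the factor $\Pr[E \mid \tilde{T}=t]/\Pr[E]$, and this tilt is dataset-dependent: it depends on the counts of the labels in $D$ for $P'$ and on those in $D'$ for $Q'$, which are different sets with different counts. So $P'$ and $Q'$ are mixtures over $t$ of clean Gaussian mechanisms with \emph{different} mixing densities, and ``checking that the R\'enyi bounds survive'' is precisely the claim that needs a proof --- conditioning on an event does not in general preserve R\'enyi divergence bounds. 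The paper closes this with Lemma~\ref{lem:convexCDP} (Gopi--Lee--Liu): view the entire collection of noisy counts, including the relabeled differing items and the noisy pivot $\bot$, as one joint Gaussian with strongly log-concave density $\propto e^{-mF}$, and observe that the conditioning event is the restriction of that joint density to a \emph{common convex region} $\cK$ where the relabeled counts stay below the noisy count of $\bot$; for restrictions of two such densities to a common convex set, the lemma gives the same $\lambda m G^2/(2\mu)$ bound as for the unrestricted Gaussians, and discarding the sub-threshold coordinates is then post-processing. You need this lemma, or an equivalent argument, to justify condition~3; without it your proof is incomplete at exactly the step you identified.
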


\begin{proof}
    We follow the same analysis as in the proof of Theorem~\ref{thm:UnkFull}, which used Lemma~\ref{lem:unify}.  We again set $S$ to be the set of labels that are common between neighbors $h$ and $h'$.  Note that we are only considering items with counts in the top-$(\bar{k})$ of each histogram and the items between $h_{(\bar{k})}$ and $h_{(\bar{k})}'$, including the zero count items with labels in $\{ \bot_j \}$.  Hence, there might be different items between $h_{(\bar{k})}$ and $h_{(\bar{k})}$, as reducing some counts in $h$ might change the order of the top-$(\bar{k}+1)$. From Lemma 5.2 in \cite{DurfeeRo19}, we know that there can be at most $\min\{\bar{k}, \Delta_0\}$ many differing items between the top-$(\bar{k}+1)$ histograms  $h_{(\bar{k})}$ and $h_{(\bar{k})}'$.  We then follow the three items that we need to show in order to apply Lemma~\ref{lem:unify}.
    
    \begin{itemize}
    
    \item[1.] We denote $A$ as Algorithm~\ref{alg:UnkTop} and the event $E$ as the noise added to counts in outcomes $S$ for $A(h)$, the noise added for the threshold $T + c_{(\bar{k}+1)}$ to get $\tilde{T}$, and the noise added to the differing items not in $S$ must be no more that $\tilde{T}$.  We define $E'$ similarly for $A(h')$.  Note that for any item $j$ that is in $h_{(\bar{k})}$ but not an item that can be returned in $S$, we know
    \[
    c_j \leq c_j' + \Delta_\infty \leq c'_{(\bar{k})} + \Delta_\infty \leq  c_{(\bar{k})} + \Delta_\infty \leq  c_{(\bar{k}+1)} + \Delta_\infty.
    \]
    The analysis is straightforward due to the difference between two Gaussians being Gaussian itself.  Hence, we have with $\hat{T} = \sqrt{2} \sigma \Phi^{-1}(1-\delta/\Delta_0)$ so that $T = \hat{T} + \Delta_\infty$.

    \begin{align*}
        \Pr_{A(x)}[\neg E] & \leq \sum_{i = 1}^{\Delta_0} \Pr \left[ c_{(\bar{k}+1)}  + \Delta_\infty +  \mathrm{N}(0, \sigma^2) >  c_{(\bar{k}+1)} +T+ \mathrm{N}(0, \sigma^2) \right] \\
        & = \Delta_0 \Pr[ \mathrm{N}(0, 2 \sigma^2) >   \hat{T}] \\
        & = \Delta_0 \left( 1 - \Phi\left(\frac{\hat{T}}{\sqrt{2}\sigma} \right) \right) \\
        & = \Delta_0 \left( 1 - (1-\delta/\Delta_0) \right) = \delta
    \end{align*}
    
    The analysis to show $ \Pr_{A(x')}[\neg E'] $ is similar.
    
    \item[2.] We then consider the distribution $P'(y)$ to be the distribution of $A(h)$ conditioned on $E$.  Note that $P(y) = \Pr[E] P'(y)$ for each $y \in S$ and similarly $Q(y) = \Pr[E'] Q'(y)$ for $Q'(y)$ being the distribution of $A(h')$ conditioned on $E'$.  We will modify the mechanism $A(h)$ and $A(h')$ that will result in the same mechanism when conditioned on events $E$ and $E'$, respectively.  For each label that differs between $h_{(\bar{k})}$ and $h'_{(\bar{k})}$, we change it to common labels $b_1, \cdots b_{\ell}$ where $\ell = |\{ j : (j, \cdot) \in \{ h_{(\bar{k})} \setminus h'_{(\bar{k})} \}   \}| \leq \Delta_0$.  Because we condition on events $E$, no outcome $y \in S$ can include the indices $b_1, \cdots, b_\ell$.  Furthermore, it was shown in \cite{CardosoRo22}[Lemma 6.4]] that the resulting histogram with common labels will also have as many as $\Delta_0$ items with differing counts, and those counts can change by at most $\Delta_\infty$, regardless of how we assign the common labels $\{b_j\}$.  We then let $P'$ and $Q'$ be the resulting distribution after this relabeling.
    
    \end{itemize}
    Next we will need to bound the R\'enyi divergence between $P'$ and $Q'$.   We will make use of the following result from \cite{GopiLeLi22}[Corollary 4.3]
\begin{lemma}
\label{lem:convexCDP}
Suppose $F, F'$ are two $\mu$-strongly convex functions over $\cK \subseteq \R^d$, and $F - F'$ is $G$-Lipschitz over $\cK$.  For any $k >0$, if we let $P \propto e^{-mF}$ and $Q \propto e^{-m F"}$ be two probability distributions on $\cK$, then we have for all $\lambda \geq 1$
\[
D_\lambda(P||Q) \leq \frac{\lambda m G^2}{2\mu}
\]
\end{lemma}

This result is useful because it allows us to condition on outcomes from a joint Gaussian Mechanism falling in some convex region, which will correspond to releasing only ``good" outcomes, i.e. not allowing certain counts from going above some noisy value.  For the Gaussian mechanism, we have $F(z) = || z - h||_2^2$ and $F'(z) = ||z - h'||_2^2$, which are both $2$-strongly convex over any convex region.  Furthermore, we have $||z - h||_2^2 - || z - h'||_2^2$ is $2\sqrt{\Delta_0} \cdot \Delta_\infty$-Lipschitz.  For the density of a Gaussian, we then use $m = \tfrac{1}{2\sigma^2}$

    \begin{itemize}
    
    \item[3.] We now want to prove that the R\'enyi divergence between $P'$ and $Q'$, which are conditioned on events in $E$ and $E'$ respectively.  To do this we will consider the joint Gaussian distribution that releases all counts over the histograms with common labels, including the $(\bar{k} + 1)$-th largest count with added $T$ to its count being labeled $\bot$, but we do not enforce the threshold.  We only want to consider events that do not have the items with labels in $\{ b_j\}$ having noisy count above the noisy count for $\bot$.  We then consider the convex region $\cK$ where these ``bad" noisy counts do not go above the noisy count for $\bot$.  We then apply Lemma~\ref{lem:convexCDP} to claim that the resulting mechanism conditioned on this region has a bound on the R\'enyi Divergence that is the same as if it were the Gaussian mechanism not constrained to region $\cK$.  Dropping the items with counts lower than the noisy count for $\bot$ is simply post processing, which does not increase the R\'enyi divergence bound.  Hence, we have $D_\lambda(P'||Q') \leq \frac{\lambda \Delta_0 \Delta_\infty^2}{2\sigma^2} $ and  $D_\lambda(P'||Q') \leq \frac{\lambda \Delta_0 \Delta_\infty^2}{2\sigma^2}$ for all $\lambda \geq 1$. 
    
    \end{itemize}
    Setting $\sigma = \Delta_\infty/\diffp$ completes the proof.
\end{proof}

\subsubsection{Exponential Mechanism}

For the previous applications, there was a crucial assumption that the input histograms were $(\Delta_0, \Delta_\infty)$-sensitive, specifically that the histogram's $\ell_0$-sensitivity must be bounded by $\Delta_0$.  However, this might not always be the case, and would be difficult to enforce a bound in practice -- requiring that each user has only a certain number of distinct items in the data.  One way to limit the impact a single user can have on the result is to limit the number of items that can be returned.  The Laplace/Gaussian Mechanism based algorithms can return an arbitrary number of things that are above the threshold and in the setting where we only have access to the top-$(\bar{k}+1)$ items, we could return all $\bar{k}$.  Hence, the CDP parameter could be bounded in terms of $\bar{k}$, but we would like to control how many things could be returned from the counts we have access to.  In this case, we can return the top-$k$ results, where $k \leq \bar{k}$ is an input to the algorithm.  Unfortunately, the previous analysis would still have the CDP parameter scale with $\bar{k}$ despite only wanting to return $k$.  

To ensure that privacy loss only scales with the number of things that are returned, we can use the classical Exponential Mechanism \cite{McSherryTa07}, as presented in Theorem~\ref{thm:ExpMech}.  It was shown that the Exponential Mechanism is part of a general framework of DP mechanisms called \emph{report noisy max}.  This can be summarized as adding noise to the quality scores for each outcome and returning the outcome with the noisy max.  Note that it is critical that only the arg max is returned, not the actual noisy value.  It turns out that adding Gumbel noise to the quality scores and returning the max item is equivalent to the Exponential Mechanism, see \cite{DurfeeRo19}.  Furthermore, it was shown that iteratively applying the Exponential Mechanism to return the top-$k$ outcomes is equivalent to adding Gumbel noise to all quality scores and returning the outcomes with the $k$ largest noisy quality scores \cite{DurfeeRo19}.  Other mechanisms in the report noisy max framework include adding Laplace noise to the quality scores \cite{DworkRo14} and adding Exponential noise to the quality scores \cite{DingKiSaStWaXiZh21} which turns out to be equivalent to the Permute-and-Flip Mechanism \cite{McKennaSh20}.\footnote{See previous blog post on \emph{one-shot} top-$k$ DP algorithms: \url{https://differentialprivacy.org/one-shot-top-k/}}

We then present the Unknown Domain Gumbel algorithm in Algorithm~\ref{alg:UnkGumbel} from \cite{DurfeeRo19} which takes an additional parameter $k \leq \bar{k}$ and importantly returns a ranked list of items, but not their counts.  
\begin{algorithm}
\caption{Unknown Domain Gumbel from Top-$(\bar{k}+1)$}\label{alg:UnkGumbel}
\begin{algorithmic}
    \Require Histogram $h$, noise scale $\beta>0$, threshold $T>0$, and $k, \bar{k}$
    \Ensure Sorted list of at most $k$ items, $I_k$.
    \State Let $h_{(\bar{k})}$ be the histogram consisting of the top-$(\bar{k})$ items 
    \State Let $c_{(\bar{k}+1)}$ be the count of the $(\bar{k} + 1)$-th item in $h$.
    \State Set $\tilde{T} = T + c_{(\bar{k}+1)}+\mathrm{Gumb}(\beta)$
    \State Initialize $\tilde{h} = \emptyset$.
    \For{each item $i$ where $(i, c_i) \in h_{(\bar{k})}$ such that $c_i >0$}
        \State Set $\tilde{c}_i = c_i+ \mathrm{Gumb}(\beta)$
        \If{$\tilde{c}_i > \tilde{T}$ }
            \State $\tilde{h} = \tilde{h} \cup \left\{ (i, \tilde{c}_i) \right\}$
        \EndIf
    \EndFor
    \State Let $I_k$ be the ordered list of at most $k$ items that are sorted in descending order by their count in $\tilde{h}$.  
    \If{$I_k<k$}
        \State $I_k = I_k \cup \{ \bot \}$
    \EndIf
\end{algorithmic}
\end{algorithm}
We then state Unknown Domain Gumbel's privacy guarantee, which will largely follow the analysis in \cite{DurfeeRo19}, although adapted for CDP.
\begin{theorem}
\label{thm:unkGumbel}
    Assume input histograms $h$ are $(\infty, 1)$-sensitive.  If we use the noise scale $\beta = 1/\diffp$, and threshold 
    \[
    T =  1 + \frac{1 }{\diffp} \log(\tfrac{\bar{k}}{\delta}),
    \] 
    then Algorithm~\ref{alg:UnkGumbel} is $\delta$-approximate $k\diffp^2/8$-CDP.
\end{theorem}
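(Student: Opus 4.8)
The plan is to apply Lemma~\ref{lem:unify}, reusing the template from the proof of Theorem~\ref{thm:UnkTop} together with the report-noisy-max analysis of \cite{DurfeeRo19}, but substituting the Gumbel / Exponential Mechanism ingredients for the Gaussian ones. Fix neighbors $h \sim h'$ and assume without loss of generality that $h$ has one additional user's data, so that over the union of all labels (counting an absent item as $0$) the count vectors satisfy $c_i(h') \le c_i(h) \le c_i(h') + 1$. Let $S$ be the set of sorted output lists producible by Algorithm~\ref{alg:UnkGumbel} on \emph{both} inputs, i.e.\ lists no item of which lies in $h_{(\bar{k})}$ but outside the set of items that $A(h')$ can ever return. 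By \cite{DurfeeRo19}[Lemma 5.2] at most $\Delta_0$ items lie in the symmetric difference of the top-$(\bar{k})$ histograms, and each such ``bad'' item $j$ has $c_j \le c_{(\bar{k}+1)} + \Delta_\infty = c_{(\bar{k}+1)} + 1$.

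For Condition~1, let $E$ be the event that the noisy threshold $\tilde T = T + c_{(\bar{k}+1)} + \mathrm{Gumb}(\beta)$ exceeds the noisy count of every bad item, and define $E'$ analogously for $A(h')$. Since a difference of two independent $\mathrm{Gumb}(\beta)$ variables is $\mathrm{Logistic}(\beta)$, for a bad item $j$,
\[
\Pr\!\left[c_j + \mathrm{Gumb}(\beta) > T + c_{(\bar{k}+1)} + \mathrm{Gumb}(\beta)\right] \le \Pr\!\left[\mathrm{Logistic}(\beta) > T-1\right] \le e^{-(T-1)/\beta},
\]
so a union bound over the $\Delta_0$ bad items with $\beta = 1/\diffp$ and $T = 1 + \tfrac{1}{\diffp}\log(\Delta_0/\delta)$ gives $\Pr_{A(h)}[\neg E] \le \delta$, and likewise $\Pr_{A(h')}[\neg E'] \le \delta$. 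On $E$ and $E'$ the output lies in $S$; writing $P,Q$ for the laws of $A(h),A(h')$ and $P',Q'$ for their conditional laws, Condition~2 holds because the Gumbel draws are independent, so the noise on bad items factors out exactly as in the proof of Theorem~\ref{thm:UnkTop}.

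The substance is Condition~3. First I would relabel each item in $h_{(\bar{k})}\setminus h'_{(\bar{k})}$ and in $h'_{(\bar{k})}\setminus h_{(\bar{k})}$ to shared dummy labels $b_1,\dots,b_\ell$, using \cite{CardosoRo22}[Lemma 6.4] to guarantee that the relabeled histograms still differ in at most $\Delta_0$ coordinates, each by at most $1$, and remain comparable in one direction so the quality score's range is $\Delta(q) \le 1$. Conditioned on $E$, $A$ on the relabeled instance is precisely: add $\mathrm{Gumb}(\beta)$ to the (at most) $\bar{k}$ relabeled counts and to the extra ``$\bot$'' score $c_{(\bar{k}+1)} + T$, then report the sorted top-$k$, truncating at $\bot$. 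By the equivalence in \cite{DurfeeRo19} between one-shot Gumbel top-$k$ and $k$-fold iteration of the Exponential Mechanism, this is a composition of $k$ Exponential Mechanisms with quality-score range at most $1$; run with Gumbel scale $\beta = 1/\diffp$, each round is the Exponential Mechanism with effective parameter $\diffp$ (since $\Pr \propto \exp(q/\beta) = \exp(\diffp\, q/\Delta(q))$), hence $\diffp^2/8$-CDP by Theorem~\ref{thm:ExpMechBR}. Theorem~\ref{thm:compCDP} then yields $k\diffp^2/8$-CDP, and dropping items below $\tilde T$ and padding with $\bot$ is post-processing, so $D_\lambda(P'\|Q'), D_\lambda(Q'\|P') \le \lambda k\diffp^2/8$ for all $\lambda \ge 1$. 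Lemma~\ref{lem:unify} then gives the claim.

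The hard part will be the relabeling step: one must verify that after merging the differing labels the conditioned mechanism genuinely coincides with an Exponential-Mechanism composition over a histogram of quality-score range $1$ (so the per-round cost is $\diffp^2/8$ and not $\diffp^2/2$), that the extra $\bot$ score acts as a legitimate additional outcome within each round, and that the $(\bar{k}+1)$-th order statistic is itself $1$-Lipschitz under the $(\infty,1)$-sensitivity assumption despite the unbounded $\ell_0$-sensitivity. A secondary subtlety is that only the sorted list of items is released, never the noisy counts --- this is exactly what makes the Gumbel-to-Exponential-Mechanism equivalence, and hence the range-based $\diffp^2/8$ bound, applicable; releasing the noisy values would break the argument.
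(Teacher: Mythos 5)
Your overall strategy is the same as the paper's: apply Lemma~\ref{lem:unify}, bound the bad event via the tail of a difference of two Gumbels, identify the conditional distributions with an iterated Exponential Mechanism over the common items, and invoke the bounded-range bound of Theorem~\ref{thm:ExpMechBR} composed $k$ times via Theorem~\ref{thm:compCDP}. Your Condition-1 computation with the $\mathrm{Logistic}(\beta)$ tail is exactly the content of Lemma~5.5 of \cite{DurfeeRo19}, which the paper simply cites, so that part is a welcome filling-in of detail.

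There is one genuine problem, in Condition~2. You define $E$ as the event that \emph{every} bad item's noisy count falls below $\tilde T$, and set $P'=P(\cdot\mid E)$. But an outcome $y\in S$ can also be produced on $\neg E$: a differing item can clear the noisy threshold and yet fail to be among the top-$k$ noisy counts, in which case $I_k$ still consists only of common items. For such $y$ one has $P(y)=\Pr[A(h)=y,E]+\Pr[A(h)=y,\neg E]>\Pr[E]\,P'(y)$, so the exact factorization required by Condition~2 of Lemma~\ref{lem:unify} fails for your choice of $E$. The correct event (as in the paper) is that each differing item is either below $\tilde T$ or outside the noisy top-$k$, i.e.\ $E=\{A(h)\in S\}$; your tail bound still applies because your event is contained in that one. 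Relatedly, ``the Gumbel draws are independent'' is not by itself sufficient to identify $P(\cdot\mid E)$ with the Gumbel top-$k$ run only on the common items --- that identification is the arg-max consistency property of Gumbel noise (Lemma~5.6 of \cite{DurfeeRo19}), which is false for, say, Gaussian report-noisy-max. You do implicitly rely on it through the one-shot-Gumbel/iterated-EM equivalence, but it is the crux of Condition~2 and should be invoked explicitly rather than attributed to independence. Finally, the relabeling step via \cite{CardosoRo22} is an unnecessary import from the Gaussian analysis: conditioned on the (correct) $E$ the differing items never appear in the output, so one simply restricts to the common items, as the paper does.
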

\begin{proof}
    We will apply our general framework from Lemma~\ref{lem:unify} and some previous results from \cite{DurfeeRo19}.  We will denote $A$ as the Unknown Domain Gumbel mechanism.  We assume WLOG that $h, h'$ differ by there being one additional user's data in $h$ when compared to $h'$. 
    \begin{itemize}
    \item[1.] We denote $S$ as the set of common outcomes that are possible between top-$(\bar{k})$ histograms $h_{(\bar{k})}$ and $h'_{(\bar{k})}$ and $E$ to be all the randomness in $A(h)$ that can generate outcomes in $S$.  Hence, $E$ must only include the Gumbel noise added to items that are common in $h$ and $h'$ and the noise that is added to the differing counts must have a noisy count either below the top-$k$ or below the noisy threshold $\tilde{T}$ that is set.  From Lemma 5.5 in \cite{DurfeeRo19}, we have the following bound when $T = 1 + \beta \log\left(\Delta_0/\delta \right)$
    \[
    \Pr_{A(h)}[E] \geq 1-\delta.
    \]
    We similarly define events $E'$ for the randomness in $A(h')$ that can generate outcomes in $S$ which gives us the same lower bound for $\Pr_{A(h')}[E']$.
    \item[2.] From Lemma 5.6 in \cite{DurfeeRo19}, we know that there exists a distribution $P', Q'$ such that for all outcomes $y \in S$, we have
    \[
    \Pr[A(x) = y] = \Pr[E] P'(y) \qquad \Pr[A(x') = y] = \Pr[E'] Q'(y).
    \]
    \item[3.] These distributions  $P', Q'$ are also related to the Exponential Mechanism.  Specifically, $P'$ is the distribution of iteratively applying the Exponential Mechanism on $h_{(\bar{k})}$ only over the items that are common between $h_{(\bar{k})}$ and $h_{(\bar{k})}$.  Similarly, we can define $Q'$ as the distribution of iteratively applying the Exponential Mechanism on $h_{(\bar{k})}'$ from items that are common between $h_{(\bar{k})}$ and $h_{(\bar{k})}$.  We can then use Theorem~\ref{thm:ExpMechBR} to conclude that $D_\lambda(P'||Q') \leq \frac{k}{8\beta^2}$ and $D_\lambda(Q'||P') \leq \frac{k}{8\beta^2}$, for all $\lambda\geq 1$. 
    \end{itemize}
     Setting $\beta = \frac{1}{\diffp}$ completes the proof.
\end{proof}

\subsubsection{Pay-what-you-get Composition}
We now consider the setting where an analyst wants to run multiple top-$k$ queries over an unknown domain.  One of the issues with the Unknown Domain Gumbel mechanism in Algorithm~\ref{alg:UnkGumbel} is that it can sometimes return fewer than $k$ results, yet the overall CDP privacy parameter scales with $k$, regardless of the number of results returned.  Hence, with $\ell^*$ many top-$k$ queries, the privacy loss would scale with $\ell^* \cdot k$ with traditional privacy techniques.  However, \cite{DurfeeRo19} showed that if we instead set an overall bound on the number of results that can be returned $k^*$ from all the intermediate top-$k$ queries, then the privacy loss will scale with $k^*$.  They referred to this as \emph{pay-what-you-get} composition because even if an analyst requests top-100 items, and only 1 result is returned, we need only deduct the number of items that are returned (including $\bot$) from the overall privacy budget until $k^*$ items have been returned.  We present the pay-what-you-get composition in Algorithm~\ref{alg:paywhatyouget}

\begin{algorithm}
\caption{Pay-what-you-get Mechanism}\label{alg:paywhatyouget}
\begin{algorithmic}
    \Require Sequence of histogram $h^{(1)}, \cdots, h^{(\ell^*)}$, noise scale $\beta>0$, threshold $T>0$, and $k^*$
    \Ensure Sequence of sorted lists $\{I^{(i)} : i \in [\ell^*]\}$.
    \State Initialize $k = 0$
    \For{$i \in [\ell^*]$}
        \State Apply the Unknown Domain Gumbel from top-$(\bar{k}^{(i)}+1)$ on histogram $h^{(i)}$ with noise scale $\beta$, threshold $T>0$, along with $k^{(i)} \leq (k^* - k)$ and $ \bar{k}^{(i)}$ to get items $I^{(i)}$, which includes $\bot$ in some cases.
        \State Update $k = k + |I^{(i)}|$
    \If{$k = = k^*$}
        \State \bf{break}
    \EndIf
    \EndFor
\end{algorithmic}
\end{algorithm}

The key observation to proving the overall privacy guarantee of pay-what-you-get is that each Unknown Domain Gumbel Mechanism is related to the Exponential Mechanism.  Removing the randomness that can generate outcomes from the differing items between neighboring datasets, the algorithm becomes equivalent to an Exponential Mechanism.  Hence, considering only the randomness over each Unknown Domain Gumbel that can generate outcomes in both neighboring histograms, we are left with just a string of at most $k^*$ many Exponential Mechanisms.  

\begin{theorem}
    Given an adaptive stream of histograms $h^{(1)}, \cdots, h^{(\ell^*)}$, where each histogram $h^{(i)}$ is $(\infty, 1)$-sensitive, the Pay-what-you-get mechanism in Algorithm~\ref{alg:paywhatyouget} with $\beta = \Delta_\infty/\diffp$ and threshold $T$ being the same as in Theorem~\ref{thm:unkGumbel} with $\delta>0$ is $\ell^*\delta$-approximate $k^* \diffp^2/8$-CDP.
\end{theorem}
\begin{proof}
    We let $A$ be the Pay-what-you-get mechanism with neighboring histogram streams $(h^{(1)}, \cdots, h^{(\ell^*)})$ and $(h'^{(1)}, \cdots, h'^{(\ell^*)} )$, i.e. each histogram $h^{(i)}$ and $h'^{(i)}$ are neighbors, we denote $P$ as the distribution for $A(h^{(1)}, \cdots, h^{(\ell^*)})$ and $Q$ as the distribution for $A(h'^{(1)}, \cdots, h'^{(\ell^*)} )$.  WLOG, we assume that each $h^{(i)}$ has an additional user's data from $h'^{(i)}$ for $i \in [\ell^*]$.  

    We will denote $S = (S^{(1)}, \cdots, S^{(\ell^*)})$ as the set of common outcomes between these neighboring streams.  
    
    \begin{itemize}
    \item[1.] Let $E = (E^{(1)}, \cdots, E^{(\ell^*)})$ be the corresponding randomness in $A(h^{(1)}, \cdots, h^{(\ell^*)}) = A^{(1)}(h^{(1)}), \cdots, A^{(\ell^*)}(h^{(\ell^*)}),$ that can generate outcomes in $S = (S^{(1)}, \cdots, S^{(\ell^*)})$, where each $A^{(i)}$ is an Unknown Domain Gumbel mechanism, and similarly we denote $E' = (E'^{(1)}, \cdots, E'^{(\ell^*)})$ as the corresponding randomness in $A(h'^{(1)}, \cdots, h'^{(\ell^*)})$ that can generate outcomes in $S$.   
    We then apply Lemma 5.5 in \cite{DurfeeRo19} to get the following for each event set of $A^{(i)}$
    \[
        \Pr_{A^{(i)}(h^{(i)})}[E^{(i)} ] \geq 1- \delta, \qquad \Pr_{A^{(i)}(h'^{(i)})}[E'^{(i)}] \geq 1- \delta, \qquad \forall i \in [\ell^*].
    \]
    Hence, applying a union bound over all $\ell^*$ events, we get 
    \[
    \Pr[E] \geq 1- \ell^*\delta, \qquad \Pr[E'] \geq 1- \ell^* \delta.
    \]
    
    \item[2.] We then apply Lemma 5.6 in \cite{DurfeeRo19}, as we did in Theorem~\ref{thm:unkGumbel} to write the distribution of $A$ in terms of a stream of top-$k^{(i)}$ mechanisms that can be further decomposed into Exponential Mechanisms.  Given the prior outcomes $y^{(<i)}$, we will write $P'^{(i)}(\cdot; y^{(<i)})$ to denote the distribution of the top-$k^{(i)}$ mechanism evaluated on $h^{(i)}_{(\bar{k}^{(i)})}$ but only on the common items between $h^{(i)}_{(\bar{k}^{(i)})}$ and $h'^{(i)}_{(\bar{k}^{(i)})}$.  Similarly, we will write $Q'^{(i)}(\cdot; y^{(<i)})$ to denote the distribution of the top-$k^{(i)}$ mechanism evaluated on $h'^{(i)}_{(\bar{k}^{(i)})}$ but only on the common items with $h^{(i)}_{(\bar{k}^{(i)})}$.  We can now apply Lemma 5.6 from \cite{DurfeeRo19} iteratively to get for all $y = (y^{(1)}, \cdots, y^{(\ell^*)})\in S$,
    \begin{align*}
    \Pr[A(h^{(1)}, \cdots, h^{(\ell^*)})=y] &= \prod_{i=1}^{\ell^*} \Pr[E^{(i)}] P'^{(i)}(y^{(i)}; y^{(<i)}) \\
    \Pr[A(h'^{(1)}, \cdots, h'^{(\ell^*)})=y] &= \prod_{i=1}^{\ell^*} \Pr[E'^{(i)}] Q'^{(i)}(y^{(i)}; y^{(<i)}).
    \end{align*}

    \item[3.] We then consider the distribution $P'(y) \defeq \prod_{i=1}^{\ell^*}P'^{(i)}(y^{(i)}; y^{(<i)})$, which can be further decomposed into separate Exponential Mechanisms.  We will write $y^{(i)} = (y^{(i)}[1], \cdots, y^{(i)}[|y^{(i)}|])$ and $\hat{P}^{(i,j)}$ as the distribution for the Exponential Mechanism for the $i$-th round's mechanism returning the $j$-th item.  Hence we have
    \[
    P'(y)=\prod_{i=1}^{\ell^*}P'^{(i)}(y^{(i)}; y^{(<i)}) = \prod_{j=1}^{\ell^*} \prod_{j = 1}^{|y^{(i)}|} \hat{P}^{(i,j)}(y^{(i)}[j]; y^{(<i)}, y^{(i)}[1], \cdots, y^{(i)}[j-1]).  
    \]
    Note that the two products on the right hand side will have at most $k^*$ many terms due to the number of items that can be returned by pay-what-you-get, and each item is the distribution of an Exponential Mechanism.  We have a similar argument for $Q'^{(i)}(\cdot; y^{(<i)})$ being decomposed as a sequence of Exponential Mechanisms with $ Q'(y)=\prod_{i=1}^{\ell^*}Q'^{(i)}(y^{(i)}; y^{(<i)}) $.    Hence, we have 
    $D_{\lambda}\left( P' ||  Q' \right) \leq \lambda k^*\diffp^2/8$ and $D_{\lambda}\left( Q' || P' \right) \leq \lambda k^*\diffp^2/8$ for all $\lambda \geq 1$.  
    \end{itemize}
    Note that we can almost apply Lemma~\ref{lem:unify}, but not quite because we have $P(y) = \Pr[A(h^{(1)}, \cdots, h^{(\ell^*)})=y] = \prod_{i=1}^{\ell^*} \Pr[E^{(i)}] P'^{(i)}(y^{(i)}; y^{(<i)}) $ when $y \in S$.  We then construct the joint distribution as 
    \begin{align*}
    P(y^{(1)}, \cdots, y^{(\ell^*)}) &= \prod_{i =1}^{\ell^*} P_i(y^{(i)} | y^{(<i)}) =  \prod_{i =1}^{\ell^*} \left( \Pr[E^{(i)}]  P_i'(y^{(i)} | y^{(<i)}) + \Pr[\neg E^{(i)}] P_i(y^{(i)} | y^{(<i)},\neg E^{(i)} )\right) \\
    & =  \prod_{i=1}^{\ell^*} \Pr[E^{(i)}] P'^{(i)}(y^{(i)}; y^{(<i)}) \\
    & \quad + \sum_{S\subseteq [\ell^*] \setminus \emptyset} \left( \prod_{i \in S}\Pr[E^{(i)}] \prod_{i \notin S}  \Pr[\neg E^{(i)}]  \right) \left( \prod_{i \in S} P_i'(y^{(i)} | y^{(<i)} \prod_{i \notin S} P_i(y^{(i)} | y^{(<i)},\neg E^{(i)} )\right) \\
   &  \eqdef  P'(y^{(1)}, \cdots, y^{(\ell^*)}) \cdot  \prod_{i=1}^{\ell^*} \Pr[E^{(i)}] + \hat{P}''(y^{(1)}, \cdots, y^{(\ell^*)})
    \end{align*}

    We can then write the probability as a convex combination, so that for all $y$ we have the following
    \begin{align*}
    \Pr[A(h^{(1)}, \cdots, h^{(\ell^*)})=y] & =P'(y^{(1)}, \cdots, y^{(\ell^*)}) \cdot  \prod_{i=1}^{\ell^*} \Pr[E^{(i)}]  +  \hat{P}''(y)
    \\
    \implies \Pr[A(h^{(1)}, \cdots, h^{(\ell^*)})=y] & = (1-\delta) P'(y) \\
    & \quad + \delta \underbrace{\left( 1/\delta \cdot \left( \hat{P}''(y)+  P'(y)\left(  \prod_{i=1}^{\ell^*} \Pr[E^{(i)}]  - (1-\delta) \right)  \right)   \right)}_{P''(y)}
    \end{align*}
    Similarly, we have 
    \[
     \Pr[A(h'^{(1)}, \cdots, h'^{(\ell^*)})=y] = (1-\delta)\underbrace{\prod_{i=1}^{\ell^*}Q'^{(i)}(y^{(i)}; y^{(<i)})}_{Q'(y)} + \delta Q''(y).
    \]
    This completes the proof.
\end{proof}

\subsection{Continual Observation}
We now present an approach for continually releasing a running counter over various domain elements while ensuring differential privacy.  The \emph{continual observation} privacy model was introduced in \cite{ChanShSo12, DworkNaPiRo10} and is meant to ensure strong privacy guarantees in the setting where we want to continually provide a running counter on a stream of events, denoted at $\omega^{(1:\ell)} = \left( \omega^{(1)}, \cdots, \omega^{(\ell)} \right)$ for $\ell = 1, \cdots, L$ where $\omega^{(i)} \in \{0,1 \}$.  It is then straightforward to extend the continual observation counters to the setting of releasing a running histogram over a known set of items, i.e. $\omega^{(i)} \subseteq [d]$, where someone can contribute a limited set of items $|\omega^{(i)}| \leq \Delta_0$ at each round $i$ in the stream.  Typically we want to ensure \emph{event-level} privacy, where we consider the change in outcomes when one event in the stream can change.

Recent work from \cite{CardosoRo22, ZhangVaKaStThMaApSp23} have also considered the continual observation setting, but in the case where we want to continually release histograms over an unknown set of items.  Consider the motivating example where we want to provide a running counter for the drugs that are purchased at a pharmacy and each customer can buy at most $\Delta_0$ different drugs.  There might be several different drugs to provide a counter for and new drugs can emerge later that were not known about before.  Hence, we would like to have algorithms that do not require a set of items to provide counts over.

We describe the algorithm at a high level, as formally describing will require some additional notation.  The main subroutine is the Binary Mechanism from \cite{ChanShSo12, DworkNaPiRo10} which will add at most $\log_2(\ell)$ many noise terms to the count of a stream of length $\ell$ events.  The number of noise terms depends on the number of 1s in the binary representation of $\ell$.  This will ensure that for a stream of length at most $L$, we can release $L$ counts, one after each event, each with Gaussian noise with standard deviation $O(\log_2(L)/\diffp)$.  Although we release $L$ counts, the privacy analysis relies on the fact that we form a table of \emph{partial sums}, so that one event in the stream can modify at most $ \log_2(L)$ partial sums and we can view the Binary Mechanism as a Gaussian mechanism on the table of partial sums for all common items between $\omega^{(1:L)}$ and $\omega'^{(1:L)}$, which has $\ell_2$-sensitivity $\Delta_0 \cdot \log_2(L)$.

The Unknown Domain Binary Mechanism follows the same approach as the classical Binary Mechanism, which we present at a high level.  The idea on the Binary Mechanism is to split a stream of items $\omega^{(1:\ell)}$  into overlapping partial sums, guaranteeing that each $\omega^{(i)}$ is part of no more than $\lceil \log_2(L+1) \rceil$ partial sums.  We create separate partial sums for each item in $\omega^{(1:\ell)}$.  For instance, with $\ell = 8$, we will focus on a single partial sum, based on the binary representation of $\ell$, so that we need only add noise to the partial sums $\sum_{j = 1}^8 \1{u \in  \omega^{(j)}}$ for each $u \in \omega^{(1:\ell)}$ which we add $\mathrm{N}(0, \sigma^2)$ noise to the partial sum and is then used again for any other partial sum that utilizes  $\sum_{j = 1}^8 \1{u \in  \omega^{(j)}}$.  For instance, when $\ell = 10$, we will use the two partial sums $\sum_{j = 1}^8 \1{u \in  \omega^{(j)}}$ and $\sum_{j = 9}^{10} \1{u \in  \omega^{(j)}}$ for each $u \in \omega^{(1:10)}$, each with its own noise added to it.  Note that for each $u \in \omega^{(1:\ell)}\setminus \omega^{(1:\ell-1)}$, we will add fresh noise to each prior partial sum for the new item $u$.  We then only release items with corresponding noisy counts if it is larger than a fixed threshold $T>0$.

We now present the privacy analysis for Unknown Domain Binary Mechanism from \cite{CardosoRo22}.

\begin{theorem}
    Assume that $|\omega^{(\ell)}| \leq \Delta_0$ for all $ \ell \in [L]$.  Setting $\sigma = 1/\diffp$ and threshold $T$ to be the following for any $\delta>0$
    \[
    T = 1 + \sigma \cdot \sqrt{\lceil \log_2(L+1) \rceil + 1} \cdot \Phi^{-1}\left(1-\frac{\delta}{\Delta_0\cdot L}\right)
    \]
    ensures that Unknown Domain Binary Mechanism is $\delta$-approximate $\Delta_0 \cdot \lceil \log_2(L+1) \rceil\diffp^2/2$-CDP, under event-level adjacent streams.
\end{theorem}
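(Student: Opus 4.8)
The plan is to instantiate Lemma~\ref{lem:unify} exactly as in the proofs of Theorem~\ref{thm:UnkTop} and Theorem~\ref{thm:unkGumbel}, with the Binary Mechanism's table of partial sums playing the role of the joint Gaussian vector. Fix event-level neighbors $\omega^{(1:L)} \sim \omega'^{(1:L)}$ differing in one round, say round $r$; without loss of generality $\omega^{(r)}$ contains one extra item's worth of contribution, so all of $\omega'$'s items appear among $\omega$'s items, and at most $\Delta_0$ distinct labels have differing partial-sum contributions, each differing by at most $1$ in the counts and hence affecting at most $\lceil \log_2(L+1)\rceil$ partial sums per label. Let $S$ be the set of outcome sequences (released noisy histograms across all $L$ rounds) whose labels are common to both streams. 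The event $E$ (resp. $E'$) is the randomness of the Binary Mechanism on $\omega$ (resp. $\omega'$) that both produces an outcome in $S$ and keeps every ``bad'' label --- one that is present in $\omega$ but cannot legitimately be returned under $\omega'$ --- below the threshold at every one of the $L$ rounds it could appear.

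First I would bound $\Pr[\neg E]$. A bad label's true partial-sum count is at most $1$ (it is a fresh item contributed at one round with $\ell_\infty$-sensitivity $1$), so its noisy released count at any round is at most $1$ plus a sum of at most $\lceil \log_2(L+1)\rceil$ i.i.d.\ $\mathrm{N}(0,\sigma^2)$ terms, i.e.\ at most $1 + \mathrm{N}(0, \lceil \log_2(L+1)\rceil \sigma^2)$; comparing against the threshold $T = 1 + \sigma\sqrt{\lceil\log_2(L+1)\rceil+1}\,\Phi^{-1}(1-\delta/(\Delta_0 L))$ and union bounding over at most $\Delta_0$ bad labels and at most $L$ rounds gives $\Pr[\neg E] \le \Delta_0 L \cdot \Pr[\mathrm{N}(0,\lceil\log_2(L+1)\rceil\sigma^2) > T-1] \le \Delta_0 L \cdot (1 - \Phi(\Phi^{-1}(1-\delta/(\Delta_0 L)))) = \delta$, where the extra ``$+1$'' slack under the square root absorbs the (possibly random) threshold if the algorithm uses a noisy threshold as in Algorithm~\ref{alg:UnkTop}; the same computation gives $\Pr[\neg E'] \le \delta$. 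This verifies condition~1 of Lemma~\ref{lem:unify}, and condition~2 holds by the usual independence-of-noise factorization: conditioning on $E$ peels off the independent noise on the bad labels and on any threshold term, so $P(y) = \1{y\in S}\Pr[E]P'(y)$ and $Q(y) = \1{y\in S}\Pr[E']Q'(y)$, where $P', Q'$ are the distributions of the Binary Mechanism restricted to the common labels (with bad labels relabeled to fresh common dummy labels as in the Theorem~\ref{thm:UnkTop} proof, invoking \cite{CardosoRo22}[Lemma 6.4] to keep the differing-count budget at $\Delta_0$ and sensitivity $\le 1$ per count).

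The core of condition~3 is to bound $D_\lambda(P'\|Q')$ and $D_\lambda(Q'\|P')$. Here $P'$ and $Q'$ are each the Gaussian mechanism applied to the table of partial sums of the common labels --- a vector in $\R^d$ with $d$ equal to the number of (common-label, partial-sum-node) pairs --- then post-processed by the thresholding/release map, which only drops coordinates and so cannot increase R\'enyi divergence. The $\ell_2$-sensitivity of this partial-sum table is $\sqrt{\Delta_0 \cdot \lceil\log_2(L+1)\rceil}$: at most $\Delta_0$ labels change, each changing at most $\lceil\log_2(L+1)\rceil$ partial sums, each by at most $1$. By Theorem~\ref{thm:Gaussian} the Gaussian mechanism with noise $\sigma = 1/\diffp$ on this vector is $\rho$-CDP with $\rho = \Delta_0 \lceil\log_2(L+1)\rceil/(2\sigma^2) = \Delta_0 \lceil\log_2(L+1)\rceil \diffp^2/2$, i.e.\ $D_\lambda(P'\|Q'), D_\lambda(Q'\|P') \le \lambda \Delta_0 \lceil\log_2(L+1)\rceil\diffp^2/2$ for all $\lambda\ge 1$. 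If instead one wants to track the conditioning-on-$E$ region as a genuine convex restriction (rather than treating the dropped coordinates as post-processing), one applies Lemma~\ref{lem:convexCDP} with $F(z) = \|z - v\|_2^2$, $F'(z) = \|z-v'\|_2^2$ where $v, v'$ are the two partial-sum tables, $\mu = 2$, Lipschitz constant $G = 2\sqrt{\Delta_0\lceil\log_2(L+1)\rceil}$, and $m = 1/(2\sigma^2)$, yielding the same $\rho$. Plugging into Lemma~\ref{lem:unify} gives $\delta$-approximate $\Delta_0\lceil\log_2(L+1)\rceil\diffp^2/2$-CDP.

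The main obstacle I anticipate is bookkeeping the interaction between the overlapping partial-sum structure and the relabeling step: one must check that reassigning the $\le \Delta_0$ bad labels to fresh common dummy labels does not inflate either the number of differing partial-sum coordinates or their per-coordinate $\ell_\infty$ sensitivity beyond what \cite{CardosoRo22}[Lemma 6.4] guarantees, and that a new item introduced at round $r$ really does contribute to at most $\lceil\log_2(L+1)\rceil$ partial-sum nodes (this is the defining property of the binary decomposition and is where the $\log_2$ factor in both $T$ and $\rho$ comes from). A secondary technical point is making sure the noisy threshold $\tilde T$, if present across rounds, is handled consistently --- either it is a single shared draw (then it is part of the joint Gaussian and contributes the ``$+1$'' inside the square root) or it is redrawn per round (then the union bound over $L$ rounds already covers it); the stated threshold's $\sqrt{\lceil\log_2(L+1)\rceil+1}$ suggests the former, and I would confirm that reading against the formal algorithm description in \cite{CardosoRo22}.
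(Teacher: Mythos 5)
Your proposal is correct and follows essentially the same route as the paper: instantiate Lemma~\ref{lem:unify} with $E$ the event that no fresh item's noisy partial-sum count crosses $T$, bound $\Pr[\neg E]$ by a union bound over $\Delta_0 L$ Gaussian tails, and bound $D_\lambda(P'\|Q')$ by viewing the restriction to common labels as a Gaussian mechanism on the partial-sum table with $\ell_2$-sensitivity $\sqrt{\Delta_0\lceil\log_2(L+1)\rceil}$. Your observation about the mismatch between the $\sqrt{\lceil\log_2(L+1)\rceil+1}$ in the threshold and the $\sqrt{\lceil\log_2(L+1)\rceil}$ used in the tail bound is well taken --- the paper's own proof silently drops the $+1$ (the algorithm as described uses a fixed, not noisy, threshold, so the extra term is just slack) --- and your treatment of the conditioning via Lemma~\ref{lem:convexCDP} is if anything more careful than the paper's.
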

\begin{proof}
    We follow the same analysis as in the earlier theorems, mainly leveraging Lemma~\ref{lem:unify}.  Let $\omega^{(1:L)}$ contain an event where neighboring stream $\omega'^{(1:L)}$ has an empty set at that event.  Say that round $\ell$ is where they differ so that $\omega'^{(\ell)} = \emptyset$ and $|\omega^{(\ell)}| = \Delta_0$.  Let $A$ denote the Unknown Domain Binary Mechanism.  We denote $S$ as the set of all outcomes that both $A(\omega^{(1:L)})$ and $A(\omega'^{(1:L)}$ can return.  Note that at round $\ell$, stream event $\omega^{(\ell)}$ can introduce as many as $\Delta_0$ previously unseen items in the stream.  We will write the distribution of $A(\omega^{(1:L)})$ as $P$ and the distribution of $A(\omega'^{(1:L)})$ as $Q$.  
    \begin{itemize}
    \item[1.]     We will write $E$ as the randomness in $A(\omega^{(1:L)})$ that can generate outcomes that are common between $A(\omega^{(1:L)})$ and $A(\omega'^{(1:L)})$.  We need to ensure that no noisy count on any new item that appears in $\omega^{(\ell)}$ but not in $\omega^{(1:\ell-1)}$ can go above the threshold $T$.  At round $\ell$ we can add together as many as $(\lceil\log_2(\ell+1) \rceil ) \leq (\lceil \log_2(L+1) \rceil ) $ independent Gaussian noise terms, which itself will be Gaussian.  We then apply a union bound over all possible $\Delta_0$ items in $\omega^{(\ell)}$ and further a union bound over all possible rounds $L$ to get the following when $T = 1 + \sigma \cdot  \sqrt{ \lceil\log_2(L+1)\rceil} \cdot \Phi^{-1}(1-\tfrac{\delta}{\Delta_0\cdot L})$
    \[
    \Pr[\neg E] \leq \Delta_0 \cdot L \cdot \Pr[\mathrm{N}(0,  \lceil\log_2(L+1)\rceil \sigma^2) \geq T - 1] = \delta.
    \]

    \item[2.] 
    We will write the distribution of the Binary Mechanism evaluated on $\omega^{(1:L)}$ with only the common items with $\omega'^{(1:L)}$ as $P'$ and whose counts are positive.  We then have for all outcomes $y \in S$ that
    \[
    P(y) = \Pr[E] P'(y)
    \]
    Since $Q$ is the distribution for the neighboring input $\omega^{(1:L)}$ where $\omega^{(\ell)} = \emptyset$, we have $S$ is all outcomes so that $Q(y) = Q'(y)$ for all $y \in S$.
\item[3.]  Note that $P'$ and $Q'$ are simply a post-processing functions of the partial sums table with Gaussian noise added to each cell in the table. Hence, we need to consider the R\'enyi divergence for the partial sum tables on common items between $\omega^{(1:L)}$ and $\omega^{(1:L)}$.  We then have $D_\lambda(P'||Q) \leq \Delta_0 (\lceil \log_2(L+1) \rceil)/\sigma^2 \lambda$ and $D_\lambda(P'||Q) \leq \Delta_0(\lceil \log_2(L+1) \rceil)/\sigma^2 \lambda$ for all $\lambda>1$ because this is a randomized post processing (including additional noise terms) on the common items in each partial sum table in the same as the Binary Mechanism.

    \end{itemize}
Setting $\sigma = 1/\diffp$ completes the proof.

\end{proof}

\section{Conclusion}
We have presented a unified framework to prove that several different algorithms over unknown domain histograms are approximate CDP.  In many settings, practitioners want to have a way to incorporate private algorithms with minimal onboarding.  A major bottleneck for incorporating private algorithms into existing systems is requiring a fixed list of items that we want to release counts for.  Furthermore, products teams might be comfortable with noise added to counts, but not displaying counts for items that never appeared in the dataset.  We wanted to show how the privacy analyses of many existing DP algorithms can be unified by fixing neighboring datasets and considering not just outcomes that can occur in both neighboring inputs, but also the related distributions that can only generate these \emph{good} outcomes.  We think that approximate CDP provides the easiest way to combine these algorithms together to get tight privacy loss bounds, as the privacy analysis of many rely on improved pure CDP bounds rather than pure DP bounds.  For example, we showed how using a CDP analysis of the Laplace mechanism can improve the CDP privacy parameter by considering composition over differing counts, rather than relying on an $\ell_1$-sensitivity bound as would be the case for DP.  We can also use the tighter connection between Exponential Mechanisms and CDP, rather than using pure DP parameters of the Exponential Mechanism.  Lastly, the Gaussian mechanism does not satisfy a pure DP bound, so using CDP is a natural fit and converting to approximate DP would result in a lossy DP parameter.  We hope that this unified framework will help demystify some previous analyses and can be leveraged in designing future private algorithms.  

\section{Acknowledgements}
Special thanks for the helpful comments from David Durfee and Thomas Steinke that helped improve the quality of this survey.

\bibliography{bib2}
\bibliographystyle{abbrvnat}

\end{document}